\DeclareMathOperator{\sinc}{sinc}
\newcommand{\ket}[1]{\mbox{$ | #1 \rangle $}}
\newcommand{\bra}[1]{\mbox{$ \langle #1 | $}}
\newcommand{\ketbra}[2]{\mbox{$ | #1 \rangle \langle #2 | $}}
\newcommand{\tr}{\mathrm{tr}}
\newcommand{\ra}{\rangle}
\newcommand{\la}{\langle}
\newcommand{\da}{\dagger}
\newcommand{\cO}{\mathcal{O}}
\newcommand{\cC}{\mathcal{C}}
\newcommand{\cL}{\mathcal{L}}
\newcommand{\oM}{\hat{M}}
\newcommand{\oR}{\hat{R}}
\newcommand{\oS}{\hat{S}}
\newcommand{\oP}{\hat{P}}
\newcommand{\oU}{\hat{U}}
\newcommand{\oH}{\hat{H}}
\newcommand{\oh}{\hat{h}}
\newcommand{\os}{\hat{\sigma}}
\newcommand{\oJ}{\hat{J}}
\newcommand{\E}{\mathrm{e}}
\newcommand{\I}{\mathrm{i}}
\definecolor{light-gray}{gray}{0.95}
\newtheoremstyle{note}      
  {\topsep/2}              	
  {\topsep/2}            	
  {}                        
  {\parindent}             	
  {\itshape}                
  {.---}                    
  {0pt}                     
  {\thmname{#1}\thmnumber{ \itshape#2}\thmnote{ (#3)}} 
\newtheorem{theorem}{Theorem}
\newtheorem{proposition}[theorem]{Proposition}
\theoremstyle{definition}
\theoremstyle{remark}
\begin{document}
%
\title{Robust entanglement buffers based on SWAP interactions}
\author{Ye-Chao Liu}
\email{Corresponding author: liu@zib.de}
\affiliation{Zuse-Institut Berlin, Takustra{\ss}e 7, 14195 Berlin, Germany}
\affiliation{Naturwissenschaftlich-Technische Fakult{\"a}t, Universit{\"a}t Siegen, Walter-Flex-Stra{\ss}e 3, 57068 Siegen, Germany}

\author{Otfried G\"uhne}
\email{otfried.guehne@uni-siegen.de}
\affiliation{Naturwissenschaftlich-Technische Fakult{\"a}t, Universit{\"a}t Siegen, Walter-Flex-Stra{\ss}e 3, 57068 Siegen, Germany}

\author{Stefan Nimmrichter}
\email{stefan.nimmrichter@uni-siegen.de}
\affiliation{Naturwissenschaftlich-Technische Fakult{\"a}t, Universit{\"a}t Siegen, Walter-Flex-Stra{\ss}e 3, 57068 Siegen, Germany}

\date{June 19, 2025}
%

\begin{abstract}

Quantum entanglement is the essential resource for quantum communication and distributed information processing in a quantum network. However, the remote generation over a network suffers from inevitable transmission loss and other technical difficulties.
This paper introduces the concept of entanglement buffers as a potential primitive for preparing long-distance entanglement.
We investigate the filling of entanglement buffers with either one Bell state or a stream of Bell states via SWAP interactions. 
We illustrate their resilience to imperfect interactions, noise, and losses, making the buffers suitable for a realistic quantum network scenario.
Additionally, larger entanglement buffers can always enhance these benefits.
\end{abstract}

\maketitle

Quantum networks are the platform to perform quantum communication and information processing tasks, such as quantum teleportation \cite{Bennett.etal1993, Andersen.etal2013}, quantum key distribution \cite{BB84, Gisin_quantum_2002, Panayi.etal2014}, and quantum clock synchronization \cite{Jozsa.etal2000, Ilo-Okeke.etal2018}. 
They also facilitate the realization of quantum computation, metrology, and sensing in a long-distance and distributed manner \cite{Cirac.etal1999, Komar.etal2014, Degen.etal2017, Komar.etal2014, Proctor.etal2018}, ultimately paving the way towards a quantum internet \cite{Kimble2008, Simon2017, Wehner.etal2018}.

Quantum entanglement \cite{Horodecki.etal2009} is arguably the essential resource in the aforementioned quantum network applications, and its generation and storage have been investigated for many years. Early implementations were based on the light-matter interaction 
\cite{duan_long-distance_2001, 
matsukevich_entanglement_2006, 
chou_measurement-induced_2005, 
eisaman_electromagnetically_2005, 
yuan_experimental_2008}, 
which requires photon detection for post-selection. 
The light-matter interface is naturally suitable for entanglement generation and distribution in quantum networks, but it suffers a serious practical problem: the transmission loss \cite{Gisin_quantum_2002, Lasers2010, Panayi.etal2014, FSOC2017}, which limits the distance of directly linked remote network nodes to under $100$ km \cite{Wengerowsky.etal2019}. A way to overcome the problem is quantum repeaters 
\cite{Briegel.etal1998, 
Dur.etal1999, 
yuan_experimental_2008, 
Sangouard.etal2011, 
bernardes_rate_2011, 
sheng_hybrid_2013, 
behera_demonstration_2019, 
pouryousef_quantum_2022, 
elsayed_fidelity_2023, 
Inesta_performance_2023, 
Davies_entanglement_2024,
Collins_multiplexed_2007,
Simon_quantum_2007}, 
which divide a long-distance channel into many elementary links, use pairwise entanglement swapping to distribute entanglement between the two end nodes, and purify entanglement when necessary. This usually requires the temporary storage and on-demand retrieval of the photon-transmitted quantum information in the long-lived matter state of a quantum memory \cite{Kuzmich.etal2003, Julsgaard.etal2004, Lvovsky.etal2009b, Hedges.etal2010, Heshami.etal2016, Lan_multiplexed_2009, Zhang_experimental_2016, Kutluer_solid-state_2017, Wen_multiplexed_2019, Li_multicell_2021}. 
A major challenge in this regard is the limited memory efficiency and fidelity \cite{Grosshans.etal2001, Varnava.etal2006} and the requirement of highly synchronized operations \cite{humphreys_deterministic_2018, LiuXiao.etal2021}.

An alternative approach is the dissipative generation of entanglement, proposed and studied on several platforms \cite{plenio_cavity-loss-induced_1999, kastoryano_dissipative_2011, krauter_entanglement_2011, rao_dark_2013,  cormick_dissipative_2013, morigi_dissipative_2015, lin_dissipative_2013, shankar_autonomously_2013, rao_dissipative_2017, cole_resource-efficient_2022,malinowski_generation_2022}. It promises a deterministic output that is independent of the initial state and resilient to errors. However, given that it requires driving fields and environmental damping channels to act globally on the whole system, it is not clear how to realize this approach remotely over a quantum network.

In this paper, we discuss a potential primitive for remote entanglement generation: entanglement buffers.
Similar to dissipative generation, entanglement buffers can accumulate entanglement in a way that is resilient to errors and largely independent of the buffer’s initial state, making them especially suitable for quantum networks.
Given a generic quantum network scenario in which a local source broadcasts a stream of maximally entangled states to two or more remote network parties for a given task, entanglement buffers accumulate the provided entanglement among the parties in local multi-qubit storage systems for later on-demand use. 
Crucially, they do not rely on the single-shot memory storage of a quantum state with maximum possible fidelity. 
Instead, the goal is to fill the buffer with, \textit{i.e.}, ``cache'', a maximum (required) amount of entanglement between the parties at the expense of as many source state copies as needed.
This is in contrast to works \cite{Briegel.etal1998, 
Dur.etal1999, 
Sangouard.etal2011, 
bernardes_rate_2011, 
sheng_hybrid_2013, 
behera_demonstration_2019, 
pouryousef_quantum_2022, 
Inesta_performance_2023, 
elsayed_fidelity_2023,
Davies_entanglement_2024,
Collins_multiplexed_2007,
Simon_quantum_2007} that explore the operation strategies on multiple independent quantum memories based on the average fidelity with respect to a target Bell state.
They design and optimize strategies from a quantum repeater perspective by generating and purifying stored entangled states, including a recent work that introduces the notion of ``entanglement buffering'' \cite{Davies_entanglement_2024} in this context.
In that setting, two pairs of quantum memories are used, where one pair stores an incoming Bell pair that is later consumed to purify the other via entanglement pumping.
Here, we consider an entanglement buffer consisting of an ensemble of qubits per party that buffers Bell-state entanglement through a sequence of SWAP interactions.
We quantify the overall amount of stored entanglement in terms of the \textit{logarithmic negativity} $E$ \cite{Vidal.etal2002, plenio_logarithmic_2005}, which offers an alternative figure of merit to fidelity, particularly when it comes to buffering a certain amount of entanglement rather than storing specific states.

In the following, we will introduce the model for an entanglement buffer in a bipartite setting using partial swap operations for the caching of Bell-pair entanglement, as depicted in Fig.~\ref{fig:model}. We first assess the entanglement caching of a single Bell pair, highlighting the influence of the initial buffer state, the interaction parameters, and the buffer size. We then focus on the practical benefits of an entanglement buffer in a quantum network scenario, studying how to fill the buffer with a steady amount of entanglement from a stream of Bell pairs, and benchmarking its performance when the interaction suffers from imperfect control, noise, and transmission loss. We close with a conclusion and outlook.

\begin{figure}[tb]
  \includegraphics[width=.99\columnwidth]{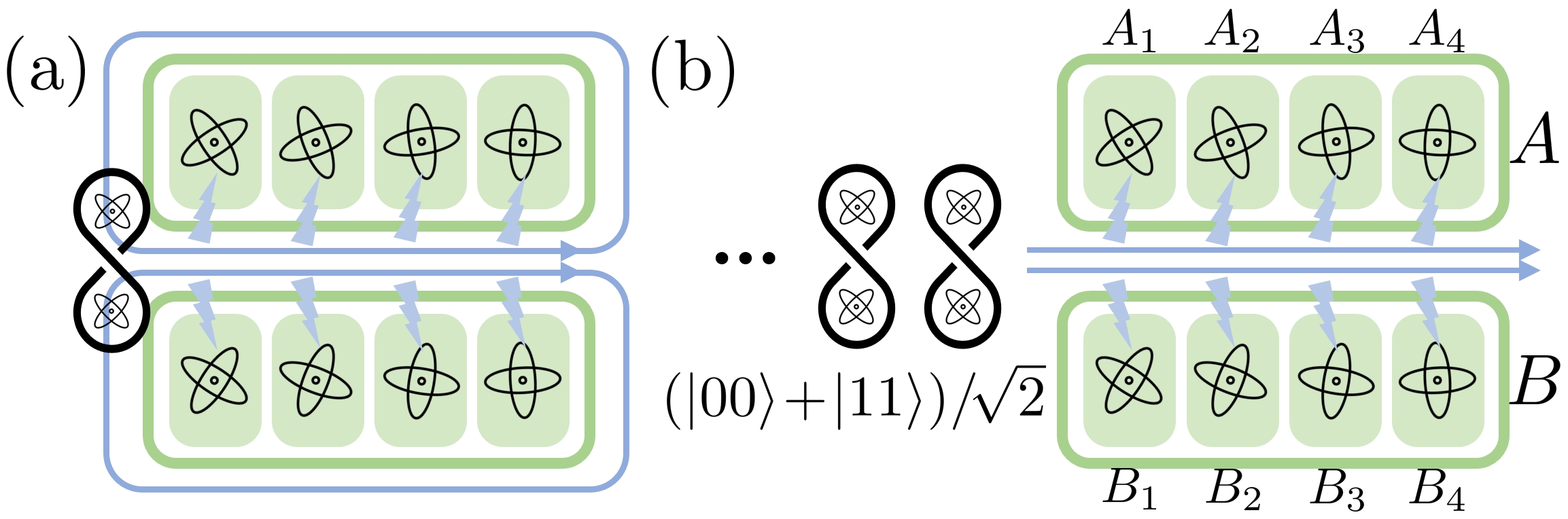}
  \caption{Sketch of a multi-pair entanglement buffer caching Bell-pair entanglement through a sequence of partial swap operations between the subsequent qubit pairs of the buffer and either (a) a single Bell pair over one or more repetitions or (b) a sequence of Bell pair copies. Entanglement is shared between two distant parties $A$ (upper part) and $B$ (lower part).}
  \label{fig:model}
\end{figure}

\textit{Scheme.---}
Consider the bipartite model setting depicted in Fig.~\ref{fig:model}: Alice and Bob share a $k$-pair entanglement buffer, each comprised of $k$ qubits on their respective sides. 
A source provides maximally entangled qubit pairs in the Bell state $|\psi_1\ra = (|00\ra + |11\ra)/\sqrt{2}$. 
The objective is to cache the entanglement of each source unit and successively fill the buffer. 
To this end, each incoming source unit undergoes the same uniform caching operation, in which its two qubits interact sequentially with all $k$ buffer qubits on Alice's and Bob's sides, respectively. 
This design avoids the need for selective addressing between memory elements, as commonly employed in repeater-type protocols \cite{Collins_multiplexed_2007, Simon_quantum_2007, Lan_multiplexed_2009, Zhang_experimental_2016, Kutluer_solid-state_2017, Wen_multiplexed_2019, Li_multicell_2021}. 
Instead, it enables entanglement accumulation through passive, symmetric operations—making the scheme particularly suited to platforms such as trapped ions, superconducting qubits, or photonic circuits, where uniform or sequential couplings are more naturally implemented than dynamically routing or selectively activating individual memories. 
In such settings, our protocol reduces experimental complexity and is designed to tolerate imperfect control, noise, and transmission loss, as will be analyzed in the following.

It is evident that for both our scheme and the memory-based quantum repeater scheme, achieving optimality relies upon the realization of a perfect full SWAP gate between incoming qubits and storage qubits. However, reality is rife with various types of noise and imperfections that inevitably impact the SWAP interaction, thereby limiting the effectiveness of the memory-based quantum repeater scheme. Nevertheless, we will demonstrate in the subsequent discussion that the entanglement buffer scheme adeptly accommodates imperfect SWAP interactions for entanglement caching.

Theoretically, an arbitrary two-qubit unitary operator can be described in the form of a Heisenberg exchange interaction $U=U_A\otimes U_B\exp(-\I \sum_{i=1}^3 r_i\sigma_i\otimes\sigma_i)V_A\otimes V_B$, 
where $\sigma_{1,2,3}$ are Pauli matrices and $U_{A,B}$ and $V_{A,B}$ are local unitaries \cite{kraus_optimal_2001, leifer_optimal_2003, zhang_geometric_2003}. The perfect SWAP gate can then be realized with interaction parameter $r_{1,2,3}=\pi/2$, which can be tuned with interaction strength and time. 
More generally, for $r_1=r_2=r_3$, it can achieve the partial SWAP gate \cite{leifer_optimal_2003, fan_optimal_2005}.

Utilizing such an interaction model, a SWAP gate can be realized in various experimental platforms, such as nuclear magnetic resonance (NMR) \cite{ichikawa_minimal_2013}, superconductor devices \cite{roth_analysis_2017, nguyen_programmable_2024}, and trapped ions \cite{molmer_multiparticle_1999, debnath_demonstration_2016, gan_hybrid_2020, Drmota_robust_2023}. 
When SWAP gates are implemented in these physical systems, the strengths $r_i$ of the Heisenberg interaction terms cannot be tuned independently. Experimental control typically encompasses two components: the XY interaction (where $r_1 = r_2$ with tunable $r_1$) and the ZZ interaction (with tunable $r_3$). We therefore focus our view on SWAP interactions characterized by the parametric form
\begin{eqnarray}\label{eq:S}
    \oS(\alpha,\beta) &=& \E^{-\I\beta(|01\ra\la 01| + |10\ra\la 10|)/2} [\mathrm{SWAP}]^{\alpha/\pi} \\
    &=& \E^{\I(\alpha-\beta)(|01\ra\la 01| + |10\ra\la 10|)/2} \E^{-\I\alpha (|10\ra\la 01| + |01\ra\la 10|)/2}\,, \nonumber
\end{eqnarray}
where the swap angle $\alpha \in[0,\pi]$ and relative phase $\beta \in [0,\alpha]$ are corresponding to XY and ZZ interactions, respectively. 
A perfect full SWAP is given by $\oS(\pi,0)$, and a full iSWAP by $\oS(\pi,\pi)$, which is sometimes easier to realize \cite{ichikawa_minimal_2013, roth_analysis_2017}.
In practice, a partial iSWAP $\oS(\alpha,\alpha)$ is often easier to realize, such as in trapped ion systems \cite{molmer_multiparticle_1999, debnath_demonstration_2016, Drmota_robust_2023}, because it is directly generated by a Hamiltonian of the form $\oH \propto |01\ra\la 10| + |10\ra\la 01|= ( \os_x\otimes\os_x + \os_y\otimes\os_y )/2$.
See Appendix~\ref{Appx:interaction} for more details \cite{supp}.

The initially empty buffer shall assume a product state, $\rho^{(2k)}_0 = \rho^{(k)}_0 \otimes \rho^{(k)}_0$ with $E=0$.
Labelling the two source qubits by $A,B$ and the $j$-th buffer qubit pairs by $A_j,B_j$, the total caching unitary applied to the combined source-buffer state reads as
\begin{equation}\label{eq:U}
    \oU (\alpha,\beta) = \prod_{j=1}^k \oS_{AA_j} (\alpha,\beta) \otimes \oS_{BB_j} (\alpha,\beta).
\end{equation}
As discussed earlier, the ideal caching scheme consists in performing a perfect full swap, which would fill the buffer with $k$ ebits of entanglement from $k$ source units, regardless of the initial (unentangled) buffer state.
However, in practice, implementing full SWAPs is often challenging due to experimental limitations on source-buffer coupling strength, gate duration, or coherence time.
In many platforms, such as superconducting qubits or trapped ions, only partial two-qubit gates are directly available, making partial SWAP-type interactions a more realistic and tunable alternative for experimental realization.
One could remedy this by allowing each source unit to pass through the buffer multiple times, as sketched in Fig.~\ref{fig:model}(a), but in general, the amount of cachable entanglement from a single copy will be sensitive to the coupling parameters $(\alpha,\beta)$, the initial buffer state, and the buffer size. We will discuss this next, before introducing the multi-copy caching protocol shown in Fig.~\ref{fig:model}(b) that alleviates the demanding initial-state parameter requirements for filling the buffer.

\textit{The single-copy protocol.---}
We begin with the simplest, analytically tractable case of a single source unit, a single caching operation $\oU$, and a buffer of size $k=1$.
In practice, it is often reasonable to consider reusing the same Bell pair for multiple interactions with the buffer, especially when the single-pass coupling strength $\alpha$ is weak.
However, due to the composition rule $\oS(\alpha_1,\beta_1)\oS(\alpha_2,\beta_2) = \oS(\alpha_1+\alpha_2,\beta_1+\beta_2)$, such repeated applications are effectively equivalent to a single interaction with a stronger effective coupling.
Thus, by analyzing the single-pass case, we capture both the single-use and repeated-use scenarios within the same framework.

The state of the buffer after the caching operation is
\begin{equation}\label{eq:channelC}
    \rho^{(2)}_1 = \tr_{AB} \left[ \oU(\alpha,\beta) \rho^{(2)}_0 \otimes |\psi_1\ra\la \psi_1| \oU^\da (\alpha,\beta) \right] \equiv \cC [\rho^{(2)}_0];
\end{equation}
see Appendix~\ref{Appx:one_pair} \cite{supp} for the Kraus representation of the so defined quantum channel $\cC$. We distinguish the two opposite cases of a buffer initialized in the maximally mixed state, $\rho_0^{(2)} = \openone/2 \otimes \openone/2$, or in a pure state $\rho_0^{(2)} = |\phi_{\theta,\delta}\ra\la \phi_{\theta,\delta}|^{\otimes 2}$ with $|\phi_{\theta,\delta}\ra=\cos\theta |0\ra+\E^{\I \delta}\sin\theta |1\ra$. 

In the former case, $\cC$ turns the buffer state into a mixture of two Bell states and the maximally mixed state,
\begin{equation}\label{eq:rho1_mm}
    \rho_1^{(2)} = \frac{1-a^2}{4} \openone + \frac{a}{2} \left[(a+b)|\psi_1\ra\la \psi_1| + (a-b)|\psi_2\ra\la\psi_2|\right],
\end{equation}
with $a=\sin^2(\alpha/2)$, $b=\sin^2[(\alpha-\beta)/2]$, and $|\psi_2\ra = (|00\ra - |11\ra)/\sqrt{2}$. 
One can show that the cached amount of entanglement is given by 
\begin{equation}\label{eq:rho1_mm_E}
    E = \log_2 \left[ 1 - \frac{1-a(a+2b)-|1-a(a+2b)|}{4} \right],
\end{equation}
which is nonzero for $a(a+2b)>1$. Partial SWAPs ($b=a$) can fill the buffer if the coupling strength exceeds $\alpha > 2\arcsin(3^{-1/4}) \approx 0.55\pi$, with the maximum $E=1$ at $\alpha=\pi$. With growing phase angle $\beta$, the caching performance for any given $\alpha$ deteriorates, until $b=0$: partial iSWAPs are not able to cache any entanglement regardless of $\alpha$. 

\begin{figure}[tb]
  \includegraphics[width=0.99\linewidth]{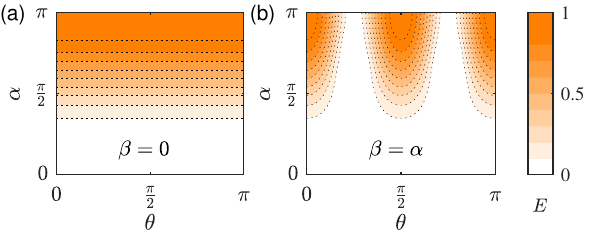}
  \caption{Entanglement cached from a single Bell pair in a 1-pair buffer by means of (a) a partial SWAP unitary and (b) a partial iSWAP of varying angles $\alpha$. Both buffer qubits are initialized in a pure state $\cos\theta\ket{0}+\sin\theta\ket{1}$ of varying $\theta$. The entanglement is measured in ebits of logarithmic negativity.
  }
  \label{fig:singlecopy}
\end{figure}

In the case of a pure initial buffer state, the adverse effect of $\beta \neq 0$ is less severe. 
The amount of cached entanglement mainly depends on the mixing angle $\theta$, but only weakly on the phase $\delta$; see Appendix~\ref{Appx:single} \cite{supp}. Setting $\delta=0$ for simplicity, we plot the cached entanglement as a function of $\theta$ and the swap angle $\alpha$ in Fig.~\ref{fig:singlecopy}, comparing (a) a partial SWAP operation to (b) a partial iSWAP. The latter is sensitive to the initial $\theta$, whereas the former is not. A full iSWAP, in particular, caches $E=\log_2 [1+\cos^2(2\theta)]$, which is zero for initial states on the equator of the Bloch sphere, $\theta = \pi/4, 3\pi/4$. In general, the buffer performs optimally and independently of $\beta$ if it is initialized in $|00\ra$ or $|11\ra$, yielding $E=\log_2 [1+\sin^4 (\alpha/2)]$. In fact, for arbitrary initial product states of the buffer, we always observe that the caching performance is optimal at $\beta=0$ (SWAP), deteriorates with $\beta$, and is worst at $\beta=\alpha$ (iSWAP).

\begin{figure}[tb]
  \includegraphics[width=0.99\linewidth]{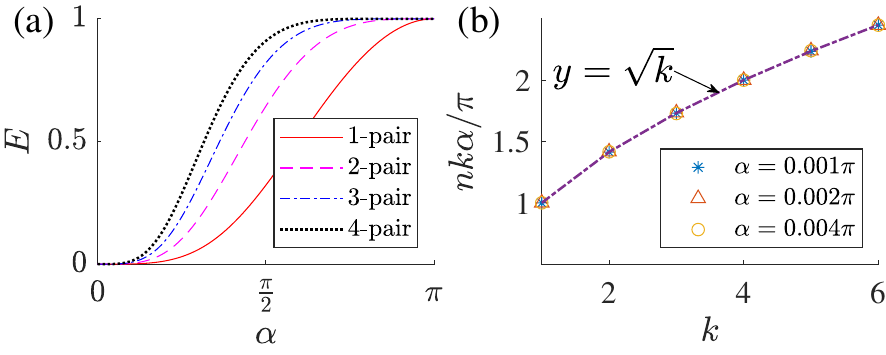}
  \caption{(a) Entanglement buffering from a single caching unitary of SWAP type with a single Bell pair for buffer sizes up to $k=4$. We plot the cached $E$ ebits as a function of the swap angle. (b) Number $n$ of repeated caching operations required to reach $E=1$ as a function of buffer size $k$, given fixed weak swap angles $\alpha$. We plot $n$ in terms of the total operation time $\propto nk\alpha$, which grows like $\sqrt{k}$ (dashed line). All buffer qubits are initialized in $|0\ra$.}
  \label{fig:onecopy_big}
\end{figure}

Our assessment of the single-pair buffer demonstrates that the caching of one ebit of entanglement from a single source unit depends crucially on the interaction strength and the initial buffer state. Before we move on to alleviate this dependence in a multi-copy protocol, we briefly discuss the influence of larger buffer sizes and numbers of caching cycles for a single source pair.
For larger $k$-pair buffers, subsequent caching operations no longer commute or combine to a single operation in general, as they consist of multiple partial swaps that distribute correlations between all the buffer qubit pairs. The benefit lies in the caching of more entanglement at smaller swap angles $\alpha$, as shown in Fig.~\ref{fig:onecopy_big}(a) up to $k=4$. However, if multiple caching operations are used to extract the entanglement from a single source copy, the total caching time will increase with the buffer size. Given a fixed interaction strength, the time per caching operation is proportional to $k\alpha$. In Fig.~\ref{fig:onecopy_big}(b), we show the total time of $n$ operations required to cache the 1 ebit from a single source unit, for various buffer sizes in the limit of weak swap angles $\alpha \ll \pi$. The time grows like $\sqrt{k}$; see Appendix~\ref{Appx:single} for the proof \cite{supp}. 
A similar behaviour was found in recent work \cite{mondal_local_2023}, which is a specific case of our single-copy protocol limited with $\beta=0$.

\textit{The multi-copy protocol.---}
Now we consider protocols for filling the buffer by consuming an arbitrarily long sequence of source Bell-pair units. Each unit is used in a single caching operation and then discarded, transforming the buffer state according to the channel $\cC$ in \eqref{eq:channelC}. With repeated applications of this channel, the buffer state undergoes a Markov chain, $\rho_n^{(2k)} = \cC \rho_{n-1}^{(2k)} = \cC^{\circ n} \rho_0^{(2k)}$ \cite{Ciccarello.etal2022, Raffaele.etal2023}. Expanding the channel $\cC$ into Kraus operators and employing the notation from Refs.~\cite{Raffaele.etal2023}, one can define the generator $\cL \rho := \cC [\rho] - \rho$ and bring it to Lindblad form. 

After sufficiently many steps, the buffer will generally converge towards a steady state $\rho_\infty^{(2k)}$, which obeys $\cL \rho_\infty^{(2k)} = 0$. We observe that this steady state is unique for $\alpha \neq 0$, given a symmetric initial product state of the buffer. In the case $k=1$, the state can be given analytically in terms of the Bell basis, 
\begin{eqnarray}
    \rho_\infty^{(2)}=\frac{1}{\mathcal{N}}\bigl\{
    &&b |\psi_1\ra\la \psi_1| +\left[ (a-b)^2 + b(1-b) \right] |\psi_2\ra\la \psi_2|\nonumber\\
    &&+(1-a)b \bigl(|\psi_3\ra\la \psi_3|+|\psi_4\ra\la \psi_4|\bigr)\Bigr\}\,,
    \label{eq:SS_1pair}
\end{eqnarray}
with $|\psi_{3,4} \ra = (|01\ra \pm |10\ra)/\sqrt{2}$ and $\mathcal{N} = a^2 + 4b(1-a)$;  see Appendix~\ref{Appx:multi} \cite{supp}. The corresponding logarithmic negativity is 
\begin{equation}\label{eq:E_SS_1pair}
    E_\infty = \log_2 \left\{ 1 + \frac{|a^2-2b|+|a^2-4ab+2b|-4b(1-a)}{2[a^2+4b(1-a)]} \right\}
\end{equation}
Once again, a full SWAP ($a,b=1$) results in the maximally entangled steady state $|\psi_1\ra$, reached after a single step. However, for other angle parameters $\alpha\neq 0$, and contrary to the single-copy case, partial iSWAPs ($b=0$) now outperform partial SWAPs: Regardless of the actual $\alpha$-value, the buffer \emph{always} converges towards the maximally entangled steady state $|\psi_2\ra$ under iSWAPs. Partial SWAPs, on the other hand, buffer entanglement only for $\alpha > 2 \arcsin(\sqrt{2/3}) \approx 0.61\pi$.

\begin{figure}
  \includegraphics[width=0.99\linewidth]{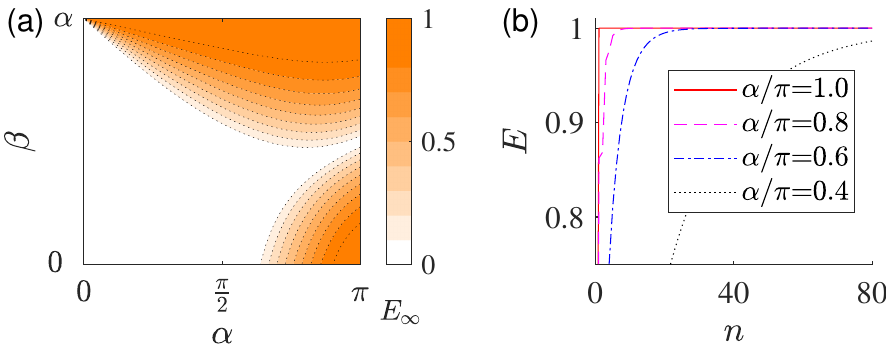}
  \caption{(a) Steady-state entanglement $E_\infty$ (ebits) cached from a sequence of Bell pairs in a 1-pair buffer, using a caching unitary of varying interaction parameters $\alpha,\beta$. (b) Entanglement $E$ accumulated in a 1-pair buffer over $n$ subsequent iSWAP caching steps ($\beta=\alpha$) with independent source pairs, comparing various $\alpha$-values.}
  \label{fig:multicopy}
\end{figure}

The steady-state entanglement \eqref{eq:E_SS_1pair} is plotted as a function of $\alpha,\beta$ in Fig.~\ref{fig:multicopy}(a), showing the iSWAP sweet spot close to $\beta=\alpha$. In (b), we show how the buffer fills with partial iSWAP steps of different $\alpha$, assuming the initial state $|00\ra$. 

\begin{figure}
  \includegraphics[width=0.99\linewidth]{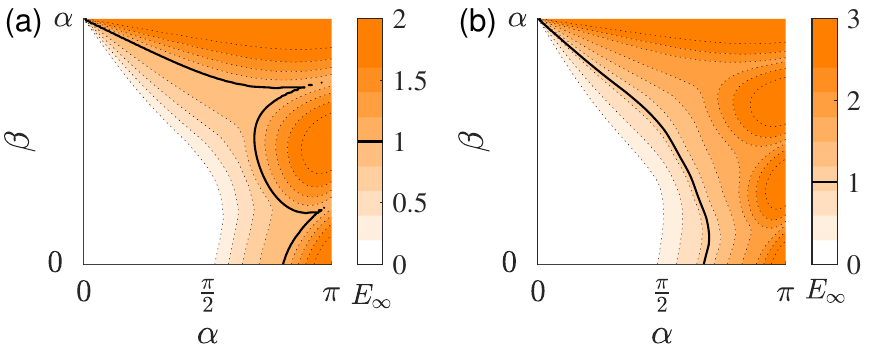}
  \caption{Steady-state entanglement $E_\infty$ cached by multi-copy protocols with varying interaction parameters $\alpha,\beta$, for (a) a $2$-pair and (b) a $3$-pair buffer. The solid line marks the parameter range (right of it) for which the buffer contains more than 1\,ebit.
    }
  \label{fig:multicopy_big}
\end{figure}

Larger buffers improve the caching performance as each source unit transfers its entanglement to  $k>1$ successive qubit pairs. Figure \ref{fig:multicopy_big} shows the steady-state entanglement as a function of $\alpha,\beta$ for (a) $k=2$ and (b) $k=3$. The range of interaction parameters for which a steady value of $E_\infty \geq 1\,$ebit can be stored (shaded area to the right of the solid line) grows with $k$, favoring the iSWAP ($\beta \to \alpha$) over the SWAP regime ($\beta \to 0$).

\begin{figure}
\includegraphics[width=1.00\linewidth]{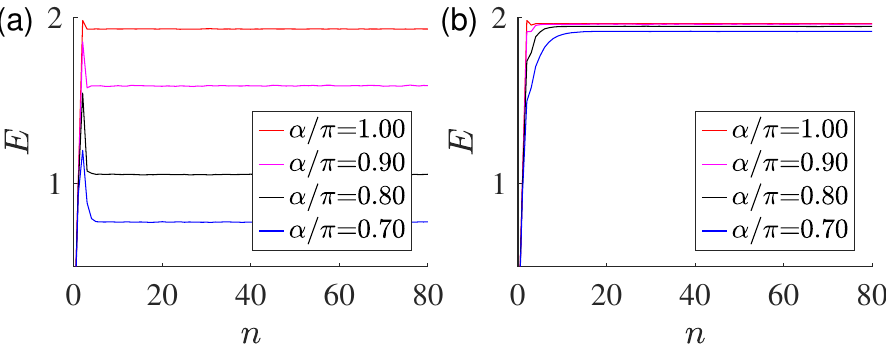}
  \caption{Average entanglement $E$ (ebits) of a $2$-pair buffer accumulated over $n$ steps subject to Gaussian-distributed fluctuations of the swap angle $\alpha$ and relative phase $\beta$ for (a) partial SWAPs ($\Tilde{\alpha} \sim \mathcal{N}(\alpha,\sigma)$, $\Tilde{\beta} \sim \mathcal{N}(0,\sigma)$) and (b) partial iSWAPs ($\Tilde{\alpha},\Tilde{\beta} \sim \mathcal{N}(\alpha,\sigma)$), with various mean $\alpha$-values. The standard deviation is $\sigma = 0.05\times \pi$ in each step $n$; we average over $20000$ random samples of $80$-step trajectories.}
  \label{fig:fluctuation}
\end{figure}

\textit{Robustness against imperfect control.---}
One advantage of the multi-copy protocol lies in its resilience to imperfect control of the interaction, i.e.,  systematic deviations or random fluctuations of the interaction parameters induced by miscalibration, drift, or noise. As previously discussed, the multi-copy protocol can attain an entangled steady state for a wide range of $\alpha$ and $\beta$ values, demonstrating robustness to parameter deviations that are not too large. For partial iSWAP interactions with fixed $\beta=\alpha$, the cached entanglement, reaching maximum, does not even depend on the $\alpha$-value.

We can therefore expect that, as one performs successive caching steps, random fluctuations of the interaction parameters in each step will not accumulate to a total loss of entanglement. In Fig.~\ref{fig:fluctuation}, we show the buffered entanglement of a $2$-pair buffer as a function of the number of caching steps, averaged over 20000 random samples for (a) partial SWAPs and (b) iSWAPs at various mean $\alpha$ values. Per step and sample, the interaction parameters are drawn from a Gaussian distribution with a standard deviation of $0.05 \times \pi$ around the mean values $\alpha \in[0,\pi]$ and $\beta\in[0,\alpha]$. We observe that smaller $\alpha$ values result in a more pronounced average loss of entanglement due to the fluctuations. At the same time, the advantage of partial iSWAP interactions remains as they can sustain more entanglement under fluctuations at smaller $\alpha$. See Appendix~\ref{Appx:adv} for more details \cite{supp}.

\begin{figure}
\includegraphics[width=1.00\linewidth]{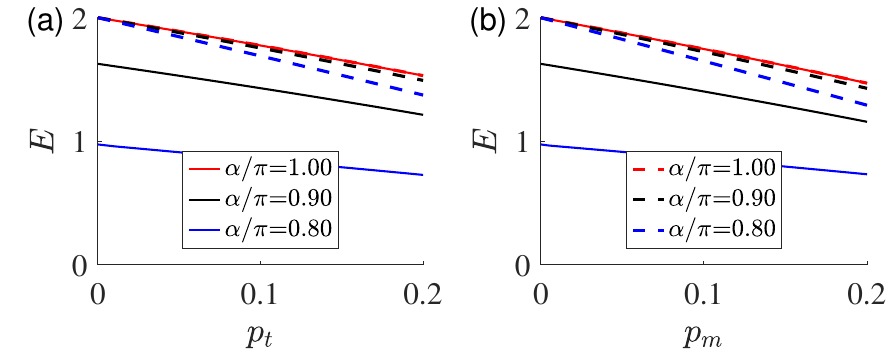}
  \caption{Steady-state entanglement $E_{\infty}$ (ebits) of a $2$-pair buffer subject to noise during (a) entanglement transmission and (b) entanglement storage for partial SWAPs (solid lines) and iSWAPs (dashed lines) with $\alpha/\pi=1.0,0.9,0.8$ (top to bottom; red, black, and blue, respectively). Notice that the red solid and dashed lines are almost indistinguishable. The steady results are estimated by $100$-step trajectories. }
  \label{fig:noise}
\end{figure}

\textit{Robustness against noise.---}
Entanglement distribution and buffering between remote parties in a quantum network can be subject to noise during entanglement transmission and during storage. We model both cases in terms of white noise, or identity noise, which is common and reasonable in realistic situations \cite{urbanek_mitigating_2021, mi_information_2021, foldager_can_2023, dalzell_random_2024}.

Noise during transmission (or source noise) can be accounted for by replacing the pure Bell state of the source pair with the mixture $(1-p_t)\ket{\psi_1}\bra{\psi_1}+p_t\openone/4$, which provides $E_t=\log_2(2-3p_t/2)$ ebits (for not too large noise $p_t \leq 2/3$). The strength of the noise $p_t$ is influenced by factors such as the quality of the source, the quality of the transmission channel, and the distance or time of transmission. 
In Fig.~\ref{fig:noise}(a), we plot the steady-state entanglement of a $2$-pair buffer (approximated by performing 100 caching steps) for SWAPs (solid) and iSWAPs (dashed) at varying $p_t$ and for three different swap angles $\alpha$. Once again, partial iSWAP operations exhibit greater tolerance to transmission noise compared to partial SWAP interactions.

We model noise during storage (memory noise) by partially depolarizing the total buffer state upon each caching step, according to the Markov chain  $\Tilde{\rho}_n^{(2k)} = (1-p_m)\cC \Tilde{\rho}_{n-1}^{(2k)} + p_m\openone/4^k$. 
Similar to transmission noise, the multi-copy buffering protocol under memory noise can still achieve steady entanglement caching with only finite loss, as shown in Fig.~\ref{fig:noise}(b). 
see Appendix~\ref{Appx:adv} for more details \cite{supp}.

\textit{Robustness against transmission loss.---}
When the Bell pairs are provided by a distant source over a quantum network, transmission loss is typically the main cause of failure. 
In our model, each transmitted Bell pair may either successfully reach both parties, be lost partially (single-sided loss), or be completely lost (two-sided loss). Crucially, the caching protocol operates passively: if one qubit is lost, the corresponding side performs no operation, while the other side independently executes its caching step. Thus, two-sided loss events leave the buffer unchanged. No adaptive feedback or heralding is required, which significantly simplifies practical implementation.

Within this framework, the advantage of multi-copy over single-shot entanglement storage becomes immediately apparent. Given equal transmission probability $p$ on both sides, successful single-shot transfer happens with probability $p^2$, while multi-copy caching naturally accommodates two-sided loss events occurring with probability $(1-p)^2$. Single-sided loss events, however, always degrade the buffer state.
For example, given full SWAP steps ($\alpha=\pi$), single-sided loss and successful transfer will change the state of 1-pair buffer between $\ket{\psi_1}\bra{\psi_1}$ and $\openone/2\otimes\openone/2$, and two-sided loss never change the states. Then the probability to find a 1-pair buffer filled with 1 ebit after $n$ caching steps is 
\begin{equation}
    q_n = p^2 \sum_{\ell=0}^{n-1} (1-p)^{2\ell} = \frac{1-(1-p)^{2n}}{2-p}p \xrightarrow{n\gg 1} \frac{p}{2-p}, \label{eq:qn_SWAP}
\end{equation}
with $q_n > p^2$ for any $n\geq 2$. 

As a figure of merit for arbitrary $\alpha$, let $q_n(E)$ be the probability to find the buffer in a state of at least  $E$ ebits of entanglement, out of all possible caching trajectories consuming $n$ source units at a given $p<1$. We estimate this by the relative frequency taken from $M=5000$ random samples of $n$-step trajectories (we choose $n=10$ to let $(1-p)^{2n}\leq 10^{-6}$). 

\begin{figure}
  \includegraphics[width=1.00\linewidth]{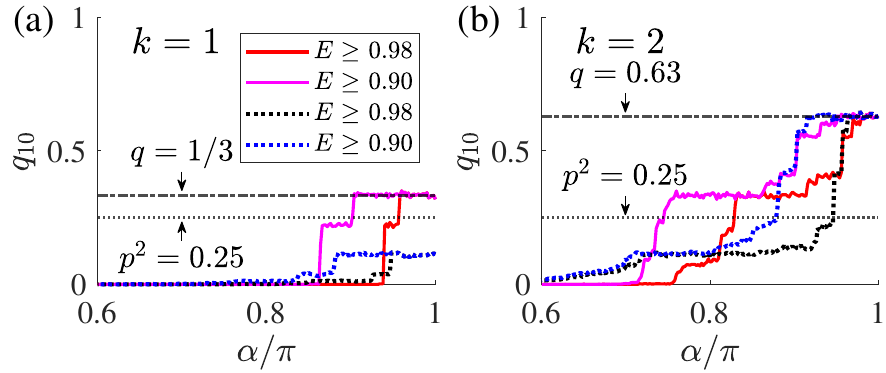}
  \caption{Probability to cache $E$ ebits of entanglement by partial SWAPs (solid lines) and iSWAPs (dash lines) of varying swap angle $\alpha$ with a sequence of Bell pairs. We consider a single-side transmission probability $p=0.5$ for (a) a $1$-pair buffer and (b) a $2$-pair buffer. The horizontal dotted line marks the probability to successfully transmit the Bell pair, the dash-dotted line gives the asymptotic probability to cache 1 ebit in an arbitrarily long sequence at $\alpha=\pi$.}
  \label{fig:probability}
\end{figure}

Our results for $q_{10}(E)$ are shown in Fig.~\ref{fig:probability}(a) and (b) for a 1-pair and a 2-pair buffer, respectively. We compare the partial SWAP (solid lines) and the iSWAP case (dash lines) as a function of the swap angle $\alpha$, for $p\!=\!0.5$ and two values of $E\lesssim 1\,$ebit each. The dash-dotted line marks the optimal 1-ebit caching probability for an asymptotic sequence of full SWAPs; one can beat the $p^2$-threshold and approach this optimum for a moderate range of $\alpha$-values deviating from $\pi$. The iSWAP case, however, is more sensitive to transmission loss: in order to refill a fully depleted buffer of maximally mixed qubit pairs, one needs a longer sequence of successful iSWAPs than SWAPs, so that single-sided loss events are more detrimental to iSWAP caching. For example, one full SWAP fills a 1-pair buffer upon success, or resets it to the maximally mixed state $\openone \otimes \openone/4$ upon single-sided loss. Whereas, two subsequent full iSWAPs are required to fill the maximally mixed buffer and each single-sided failure leads to a reset. Consequently, the 1-ebit success rate is even lower than $p^2$, as seen in Fig.~\ref{fig:probability}(a). For the 2-pair buffer, SWAPs and iSWAPs can reach the same 1-ebit success probability of about $63\%$, and greater buffer sizes can be more beneficial.
In this example, the improvement, from $p^2 = 25\%$ to $63\%$, is equivalent to tripling the transmission distance in fiber-based networks, where the success probability decays exponentially with length \cite{Gisin_quantum_2002}.
Notably, this performance enhancement persists even when additional imperfections, such as memory noise (decoherence), are taken into account; see Appendix~\ref{Appx:adv} for more details \cite{supp}.


\textit{Conclusion.---}
We have introduced the concept of entanglement buffers, local single- or multi-qubit devices shared by two parties, which accumulate and temporarily store (``cache'') entanglement provided by a source for later on-demand use. Ideally, the buffer could be filled by means of full SWAP operations with source Bell pairs, thereby acting as a quantum memory for Bell states. Crucially, however, one or more ebits of entanglement can also be cached by consuming multiple source copies when only weaker partial SWAP or iSWAP interactions are available. This could facilitate entanglement storage for quantum networking tasks in a wider range of realistic settings with limited control accuracy.

We have also demonstrated that the entanglement buffer scheme can tolerate realistic imperfections such as systematic or random deviations in the interaction parameters and decoherence during transmission or storage.
Moreover, we have analyzed the impact of transmission loss and found that entanglement buffers can have a higher success rate than single-shot Bell state storage. 

Future works could explore the extraction and concentration of buffered entanglement onto selected qubit pairs, enabling its direct use in quantum communication, computation, or sensing. 
Potential approaches include local SWAP operations, entanglement distillation, or routing protocols, depending on the platform \cite{Bennett_purification_1996, Dur_entanglement_2007, Perseguers_quantum_2010, Azuma_all_2015, Campbell_roads_2017}. 
Another direction is to extend the buffering protocol to support genuine multipartite entanglement and its applications in distributed quantum tasks. 
In addition, entanglement buffers could be integrated into repeater-based network architectures to reduce control overhead, increase resilience to channel noise, and support scalable passive entanglement distribution.
Finally, other experimental techniques such as error correction for known initial buffer states \cite{Drmota_robust_2023} could be employed to further enhance the buffering performance.

\acknowledgments
We are grateful to Satoya Imai and Yiru Zhou for interesting discussions.
This work was supported by the Deutsche Forschungsgemeinschaft (DFG, German Research Foundation, project numbers 447948357 and 440958198), the Sino-German Center for Research Promotion (Project M-0294), and the German Ministry of Education and Research (Project QuKuK, BMBF Grant No.~16KIS1618K).
Y.-C. Liu is also supported by the DFG Cluster of Excellence MATH+ (EXC-2046/1, Project No.~390685689) funded by the Deutsche Forschungsgemeinschaft (DFG).‌

%
%

\title{Supplemental Material: Entanglement Buffers}
\author{Ye-Chao Liu}
\affiliation{Naturwissenschaftlich-Technische Fakult{\"a}t, Universit{\"a}t Siegen, Walter-Flex-Stra{\ss}e 3, 57068 Siegen, Germany}

\author{Otfried G\"uhne}
\email{otfried.guehne@uni-siegen.de}
\affiliation{Naturwissenschaftlich-Technische Fakult{\"a}t, Universit{\"a}t Siegen, Walter-Flex-Stra{\ss}e 3, 57068 Siegen, Germany}

\author{Stefan Nimmrichter}
\email{stefan.nimmrichter@uni-siegen.de}
\affiliation{Naturwissenschaftlich-Technische Fakult{\"a}t, Universit{\"a}t Siegen, Walter-Flex-Stra{\ss}e 3, 57068 Siegen, Germany}

\date{\today}
\maketitle
%

\onecolumngrid

\appendix

\section{Disscussion of the SWAP-type interaction}\label{Appx:interaction}
We begin with a general two-qubit unitary gate $U$, which can be expressed in terms of a 
Heisenberg exchange interaction \cite{kraus_optimal_2001, leifer_optimal_2003, zhang_geometric_2003},
\begin{eqnarray}
    U \sim U_d = \exp(-\I \sum_{i=1}^3 r_i\sigma_i\otimes\sigma_i)\,,
\end{eqnarray}
such that $U=(U_A\otimes U_B) U_d (V_A\otimes V_B)$,
where $U_{A,B}$ and $V_{A,B}$ are local unitaries, and the Pauli matrices $\sigma_{1,2,3}$ are represented by 
\begin{eqnarray}
    \os_x=\begin{pmatrix}
        0&  1\\
        1&  0
    \end{pmatrix}\,,\quad
    \os_y=\begin{pmatrix}
        0&  \I\\
        -\I&  0
    \end{pmatrix}\,,\quad
    \os_z=\begin{pmatrix}
        1&  0\\
        0&  -1
    \end{pmatrix}\,,
\end{eqnarray}
in the single-qubit computational basis $(|0\ra ,|1\ra)$. 

As we discussed in the main text, the Heisenberg interaction terms are not mutually independent, but encompasses two components: the resonant flip-flop (or XY) interaction and the controlled phase rotation (or ZZ-interaction), i.e., 
\begin{eqnarray}
    &\oR_{xy} & = \exp\bigl[-\I r_1 (\sigma_x\otimes\sigma_x+\sigma_y\otimes\sigma_y) \bigr] \,,\\
    &\oR_{zz} & = \exp\bigl[-\I r_3 (\sigma_z\otimes\sigma_z) \bigr]\,,
\end{eqnarray}
so the experimental related interaction can be written as
\begin{eqnarray}
    U_{\rm exp} & = \oR_{xy}(r_1)\oR_{zz}(r_2)\,.
\end{eqnarray}
Then the full two-qubit SWAP and iSWAP gates,
\begin{eqnarray}
    \text{SWAP} &=& \ket{00}\bra{00}+\ket{11}\bra{11}+\ket{01}\bra{10}+\ket{10}\bra{01}\,,\\
    \text{iSWAP} &=& \ket{00}\bra{00}+\ket{11}\bra{11}-i(\ket{01}\bra{10}+\ket{10}\bra{01})\,,
\end{eqnarray}
can be realized as 
\begin{eqnarray}
    \text{SWAP}= e^{\I \pi/4} \oR_{xy}(r_1=\pi/4)\oR_{zz} (r_3=\pi/4)\,, \qquad \text{iSWAP}= \oR_{xy}(r_1=\pi/4).
\end{eqnarray}
Here we employ the iSWAP sign convention $(-i)$ for simplicity, which can realized experimental as well \cite{ichikawa_minimal_2013}. 
It relates to the usual one in the literature by hermitian conjugation. 

In order to further simplify our calculation, let $r_1=r_2=\alpha/4, r_3=-\beta/4$, and include the global phase of SWAP into the phase rotation (ZZ), 
\begin{eqnarray}
    \oR_{xy}(\alpha) &=& \exp\left[-\I \alpha (\os_x\otimes \os_x + \os_y \otimes \os_y)/4\right]\,,\\
    \oP(\beta) &=& \exp\left[\I \beta (\os_z \otimes \os_z-\openone\otimes\openone)/4 \right]\,.
\end{eqnarray}
Partial SWAPs and partial iSWAPs hence can be easily represented as 
\begin{eqnarray}
    [{\rm SWAP}]^{\alpha/\pi} = \oP(-\alpha)\oR_{xy}(\alpha)\,,\qquad
    [{\rm iSWAP}]^{\alpha/\pi} = \oR_{xy}(\alpha)\,,
\end{eqnarray}
and a general partial swap operation can be the form, as also stated in Eq.~\eqref{eq:S} in the main text,  
\begin{eqnarray}
    \oS(\alpha,\beta):=\oP(\beta)  [{\rm SWAP}]^{\alpha/\pi}= \oP(\beta-\alpha)\oR_{xy}(\alpha) =  \begin{pmatrix}
        1&      0&      0&      0\\
        0&      \E^{-\I \beta/2}&      0&      0\\
        0&      0&      \E^{-\I \beta/2}&      0\\
        0&      0&      0&      1
    \end{pmatrix}
    \begin{pmatrix}
        1&      0&      0&      0\\
        0&      \frac{1+\E^{\I\alpha}}{2}&      \frac{1-\E^{\I\alpha}}{2}&      0\\
        0&      \frac{1-\E^{\I\alpha}}{2}&      \frac{1+\E^{\I\alpha}}{2}&      0\\
        0&      0&      0&      1
    \end{pmatrix}
    \,,
\end{eqnarray}
where $\alpha\!\in\! [0,\pi]$ and $\beta\!\in\![0,\alpha]$. 
Due to the commutativity of $\oS(\alpha,0)$ and $\oP(\beta)$, we have the composition rule
\begin{eqnarray}\label{eq:appx_CompRule}
    \oS(\alpha_1,\beta_1)\oS(\alpha_2,\beta_2) = \oP(\beta_1) \oS(\alpha_1,0) \oP(\beta_2) \oS(\alpha_2,0) = \oP(\beta_1)  \oS(\alpha_2,0)  \oS(\alpha_1,0) \oP(\beta_2)= \oS(\alpha_1+\alpha_2,\beta_1+\beta_2)\,.
\end{eqnarray}
To better understand the parametrization of general swap-type gates, in Fig.~\ref{fig:Appx:fidelity}, we show the gate fidelities of $ \oS(\alpha,\beta)$ with respect to the ideal SWAP and iSWAP gates with varying parameters based on their Choi states.
For example, the partial SWAP and iSWAP gates with $\alpha=0.90\pi$ have a fidelity of $98.16\%$ and $98.77\%$ with respect to the ideal SWAP and iSWAP gates, respectively.

\begin{figure}[tb]
    \includegraphics[width=0.49\linewidth]{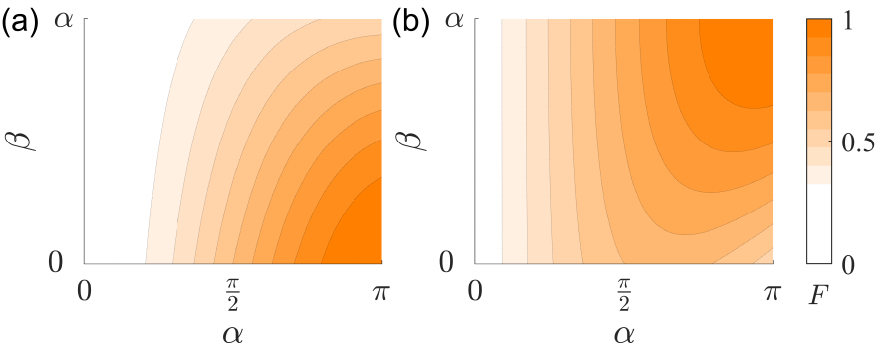}
    \caption{Fidelity of the general swap-type gate $\oS(\alpha,\beta)$ with respect to the ideal (a) SWAP and (b) iSWAP gates, as a function of the interaction parameters $\alpha$ and $\beta$. }
    \label{fig:Appx:fidelity}
\end{figure}

\section{1-pair caching transformation}\label{Appx:one_pair}
For a 1-pair buffer and a source Bell pair, the caching unitary is given by a tensor product of general swaps on the $A$-side and on the $B$-side, $\oU(\alpha,\beta) = \oS_{AA_1} (\alpha,\beta) \otimes \oS_{BB_1} (\alpha,\beta)$. 
Given that the source pair is in the Bell state $|\psi_1\ra$ and assuming the buffer is initialized in some $\rho_0^{(2)}$, application of the caching unitary transforms the reduced buffer state according to the quantum channel \eqref{eq:channelC},
\begin{equation}\label{eq:channelC_k1}
    \rho^{(2)}_1 = \cC [\rho^{(2)}_0] = \tr_{AB} \left[ \oU(\alpha,\beta) \rho^{(2)}_0 \otimes |\psi_1\ra\la \psi_1| \oU^\da (\alpha,\beta) \right] =\sum_{j=1}^4 \oM_j \rho_0^{(2)} \oM_j^\da, \qquad \text{with}\quad \oM_j = \la \psi_j|\oU(\alpha,\beta)|\psi_1\ra.
\end{equation}
Here, we have expanded the partial trace over the source pair $AB$ in the basis of two-qubit Bell states, $|\psi_{1,2}\ra = (|00\ra \pm |11\ra)/\sqrt{2}$ and $|\psi_{3,4}\ra = (|01\ra \pm |10\ra)/\sqrt{2}$. The resulting Kraus operators $\oM_j$, which act on the buffer qubit pair $A_1B_1$, can also be expanded in the Bell basis,
\begin{eqnarray} 
    \oM_1 
    &=& \frac{1}{4} \E^{-\I \beta} (1 + \E^{2\I \alpha} + 2\E^{\I \beta})\ket{\psi_1}\bra{\psi_1} 
    + \frac{1}{2}(1 + \E^{\I (\alpha-\beta)})\ket{\psi_2}\bra{\psi_2}
    + \frac{1}{2} \E^{-\I\beta/2} (1 + \E^{\I\alpha})(\ket{\psi_3}\bra{\psi_3} + \ket{\psi_4}\bra{\psi_4})\,,\nonumber\\
    \oM_2 
    &=& \frac{1}{2}(1 - \E^{\I (\alpha-\beta)})\ket{\psi_1}\bra{\psi_2}
    + \frac{1}{4} \E^{-\I \beta} (2\E^{\I \beta} - 1 - \E^{2\I \alpha})\ket{\psi_2}\bra{\psi_1}\,,\nonumber\\
    \oM_3 
    &=& \frac{1}{2} \E^{-\I\beta/2} (1 - \E^{\I\alpha})\ket{\psi_1}\bra{\psi_3} +\frac{1}{4} \E^{-\I \beta} (1 - \E^{2\I\alpha}) \ket{\psi_3}\bra{\psi_1}\,,\nonumber\\
    \oM_4 
    &=& \frac{1}{2} \E^{-\I\beta/2} (1 - \E^{\I\alpha})\ket{\psi_1}\bra{\psi_4} -\frac{1}{4} \E^{-\I \beta} (1 - \E^{2\I\alpha}) \ket{\psi_4}\bra{\psi_1}\,. \label{eq:channelC_k1_KrausOp}
\end{eqnarray}
For the simple case of an initially maximally mixed state $\rho_0^{(2)} = \openone/4$, we obtain
\begin{equation}\label{eq:channelC_k1_mm}
\begin{aligned}
    \rho^{(2)}_1 
    =\frac{1}{4}\sum_{j=1}^4 \oM_j \oM_j^\da
    &= \left\{\frac{1}{4}\sin^4(\alpha/2)+\frac{1}{2}\sin^2(\alpha/2)\sin^2[(\alpha-\beta)/2]+\frac{1}{4}\right\} &\ket{\psi_1}\bra{\psi_1}& \nonumber\\
    &+ \left\{\frac{1}{4}\sin^4(\alpha/2)-\frac{1}{2}\sin^2(\alpha/2)\sin^2[(\alpha-\beta)/2]+\frac{1}{4}\right\} &\ket{\psi_2}\bra{\psi_2}& \nonumber\\
    &+ \frac{1}{4}\left[1-\sin^4(\alpha/2)\right] &\bigl(\ket{\psi_3}\bra{\psi_3}& + \ket{\psi_4}\bra{\psi_4} \bigr).
\end{aligned}
\end{equation}
Expressed in terms of $a=\sin^2(\alpha/2)$ and $b=\sin^2[(\alpha-\beta)/2]$, this matches \eqref{eq:rho1_mm} and yields the logarithmic negativity \eqref{eq:rho1_mm_E} in the main text.

\section{Additional details on the single-copy protocol}\label{Appx:single}

Here we present several results concerning the entanglement caching of a single Bell-pair copy. First we introduce a spin representation of caching operations in the limit of weak interactions, $\alpha \to 0$. This will serve as an approximation to describe entanglement caching from a single source pair into $k$-pair buffers by means of many repeated applications of a weak caching unitary. For $k=1$, the description is exact at any $\alpha$.

Letting $\beta=r\alpha$ with $0\leq r \leq 1$ and making use of the notations in Appendix~\ref{Appx:one_pair}, the total caching unitary, Eq.~\eqref{eq:U} in the main text, can be expressed as
\begin{eqnarray}
    \oU(\alpha,\beta)&=&\prod_{j=1}^k 
    \E^{\I(1-r)\alpha/2}
    \E^{-\I(1-r)\alpha\oh_z^j}
    \E^{-\I\alpha\oh_{xy}^j}\,,\\
    \text{with}\qquad \oh_z^j &=& \frac{1}{4}(\os_z^A \os_z^{A_j} + \os_z^B \os_z^{B_j})\,,\qquad \oh_{xy}^j = \frac{1}{4}(\os_x^A \os_x^{A_j} + \os_y^A \os_y^{A_j} + \os_x^B \os_x^{B_j} + \os_y^B \os_y^{B_j})\,.
\end{eqnarray}
Here, we omit the tensor products for simplicity, and we instead denote the subspace each Pauli matrix acts on by a superscript: $A,B$ for the two source qubits, and $A_j,B_j$ for the $j$-th qubit pair of the buffer.
In the limit of small $\alpha$, we can expand the caching unitary with help of the Baker–Campbell–Hausdorff formula,
\begin{eqnarray}
    \oU(\alpha,\beta) = \E^{\I k(1-r)\alpha/2}
    \E^{-\I(1-r)\alpha\sum_j\oh_z^j}
    \E^{-\I\alpha\sum_j\oh_{xy}^j}\E^{\cO(\alpha^2)}\,,
\end{eqnarray}
omitting the second-order terms $\E^{\cO(\alpha^2)}$.
This allows us to express the operators acting on the $k$ buffer qubits on the $A$-side and on the $B$-side in terms of collective spin operators $\oJ^{A,B}$, with
\begin{eqnarray}
    \oJ_{x,y,z} = \frac{1}{2} \sum_{j=1}^k \os_{x,y,z}\,,\quad
    \oJ_{\pm} = \oJ_x \pm i \oJ_y = \sum_{j=1}^k \os_{\pm}\,,\quad
    \os_{\pm} = \frac{1}{2}(\os_x \pm i \os_y)\,.
\end{eqnarray}
The eigenstates $|m\ra$ of $\oJ_z$, with $m=-k/2,\ldots,k/2$, are the symmetric Dicke states; in particular $|-k/2\ra = |0\ra^{\otimes k}$ and $|k/2\ra = |1\ra^{\otimes k}$. The states obey 
\begin{eqnarray}
    \oJ_z \ket{m} = (-m) \ket{m}\,, \qquad \oJ_\pm \ket{m} = \sqrt{\frac{k}{2}\left(\frac{k}{2}+1\right)-m(m\pm 1)} \ket{m\pm 1}\,.
\end{eqnarray}
With this, the approximate weak-$\alpha$ caching unitary can be written as
\begin{eqnarray} \label{eq:U_weak}
    \oU(\alpha,\beta) 
    \simeq
    \E^{-\I (1-r)\alpha(\oH_z^A+\oH_z^B)}
    \E^{-\I\alpha( \oH_{xy}^A + \oH_{xy}^B)}\,,\qquad \text{with} \qquad 
    \oH_z=\frac{1}{2}\os_z \oJ_z \,, \qquad \oH_{xy} = \frac{1}{2}(\os_{+} \oJ_{-} + \os_{-} \oJ_{+})\,.
\end{eqnarray}
Here we have omitted the irrelevant global phase $\E^{\I k(1-r) \alpha/2}$.
The operator exponential with $\oH_{xy}$ can be expanded as
\begin{eqnarray}
    \E^{-\I\alpha(\oH_{xy}^A + \oH_{xy}^B)} =
     \left[\cos(|\oH_{xy}^A|\alpha)-i\alpha\sinc(|\oH_{xy}^A|\alpha)\oH_{xy}^A\right]
    \otimes
    \left[\cos(|\oH_{xy}^B|\alpha)-i\alpha\sinc(|\oH_{xy}^B|\alpha)\oH_{xy}^B\right]\,,
\end{eqnarray}
introducing the absolute-value operator
\begin{eqnarray}
    |\oH_{xy}|:=\sqrt{\oH_{xy}^2}=\frac{1}{2}\sqrt{\ket{1}\bra{1}\otimes\oJ_-\oJ_++\ket{0}\bra{0}\otimes\oJ_+\oJ_-}\,.
\end{eqnarray}
It satisfies $[H_{xy},|H_{xy}|]=0$ and is thus diagonal in the product basis of buffer and source $z$-eigenstates.

Suppose the buffer qubits are initialized in the state $|0\ra^{\otimes 2k}=|-k/2\ra|-k/2\ra$. Then a single Bell pair can at most induce a single Dicke excitation on each side. Abbreviating the Dicke states $|g\ra \equiv |-k/2\ra$ and $|e\ra \equiv |-k/2+1\ra$, the buffer state remains in the four-dimensional subspace spanned by $\{|g\ra,|e\ra\}^{\otimes 2}$, where the spin operators on each side reduce to $\oJ_z = (k/2)|g\ra\la g| + (k/2-1)|e\ra\la e|$ and $\oJ_+ = \oJ_-^\da = \sqrt{k}|e\ra\la g|$. The combined state of buffer and source pair after application of the weak caching operation reads as 
\begin{eqnarray}
\ket{\Psi}
    &=&\oU(\alpha,r\alpha) \frac{\ket{0}\ket{g}\ket{0}\ket{g} + \ket{1}\ket{g}\ket{1}\ket{g}}{\sqrt{2}} \nonumber\\
    &=& \frac{1}{\sqrt{2}}\left(\ket{0}\ket{g}\ket{0}\ket{g} + \left[\cos \left(\frac{\sqrt{k}\alpha}{2}\right)\E^{\I(1-r)k\alpha/2}\ket{1}\ket{g}-i\sin\left(\frac{\sqrt{k}\alpha}{2}\right)\E^{\I(1-r)\alpha/2}\ket{0}\ket{e}\right]^{\otimes 2}\right)\,,
\end{eqnarray}
up to the omitted global phase mentioned above. The reduced buffer state follows by tracing out the source pair,
\begin{eqnarray}
    \rho^{(k)}_1 
    &=& \frac{1+x^2}{2}\ket{gg}\bra{gg}+\frac{(1-x)^2}{2}\ket{ee}\bra{ee} - \frac{1-x}{2}\left[\E^{-\I(1-r)\alpha}\ketbra{gg}{ee} + h.c. \right] + \frac{x(1-x)}{2}\left[\ketbra{ge}{ge} + h.c.\right]\,,
\end{eqnarray}
with $x=\cos^2(\sqrt{k}\alpha/2) \in [0,1]$. The entanglement in terms of logarithmic negativity can be given exactly as
\begin{eqnarray}\label{apxeq:ent}
    E = \log_2\left[1+(x-1)^2\right] = \log_2 \left[1+\sin^4(\sqrt{k}\alpha/2)\right]\,,
\end{eqnarray}
which is independent of the phase rotation angle $\beta = r\alpha$. 
This implies that, for a buffer of sufficiently large size $k=(\pi/\alpha)^2$, the caching operation can extract all the entanglement from the source Bell pair. 
The same result applies if the buffer is initialized in $\ket{1}^{\otimes 2k}=\ket{e}$, as this merely swaps the roles of the basis states $\ket{0}\leftrightarrow\ket{1}$ and $\ket{g}\leftrightarrow\ket{e}$.

More generally, if the buffer is initialized in the product state $\ket{\phi}^{\otimes 2k}$ with $\ket{\phi}=\cos\theta\ket{0}+\E^{\I \delta}\sin\theta\ket{1}$, we can expand it on each side in terms of the Dicke states, 
\begin{eqnarray}
    \ket{\phi}^{\otimes k}=\sum_{m=-k/2}^{k/2} a_m \ket{m}\,,\qquad a_m=\sqrt{C_k^{k/2-m}}(\cos\theta)^{k/2-m}(\sin\theta)^{k/2+m} \E^{\I(k/2+m)\delta}\,,
\end{eqnarray}
where $C_N^K=N!/K!(N-K)!$ is the binomial coefficient. 
The caching unitary is block-diagonal with each block coupling $|m\ra|1\ra$ and $|m+1\ra|0\ra$ on the $A$-side and on the $B$-side, which facilitates an efficient numerical computation. 
For the simplest case of $k=1$, the combined source-buffer state after the caching operation is 
\begin{eqnarray}
\ket{\Psi}
    =&& \frac{1}{\sqrt{2}}\oU(\alpha,r\alpha) \left(\left[\ket{0}(\cos\theta\ket{g}+\E^{\I \delta}\sin\theta\ket{e})\right]^{\otimes 2} + \left[\ket{1}(\cos\theta\ket{g}+\E^{\I \delta}\sin\theta\ket{e})\right]^{\otimes 2}\right) \nonumber\\
    =&& \frac{1}{\sqrt{2}} \left[\E^{-\I(1-r)\alpha/4}\cos\theta\ket{0}\ket{g} 
    + \E^{\I(1-r)\alpha/4}\E^{\I \delta}\sin\theta \left(\cos \left(\frac{\alpha}{2}\right)\ket{0}\ket{e}-i\sin\left(\frac{\alpha}{2}\right)\ket{1}\ket{g} \right)\right]^{\otimes 2}\nonumber\\
    &+& \frac{1}{\sqrt{2}} \left[\E^{-\I(1-r)\alpha/4}\E^{\I \delta}\sin\theta\ket{1}\ket{e} 
    + \E^{\I(1-r)\alpha/4}\cos\theta\left(\cos\left(\frac{\alpha}{2}\right)\ket{1}\ket{g}-i\sin\left(\frac{\alpha}{2}\right)\ket{0}\ket{e}\right)\right]^{\otimes 2}\,,
\end{eqnarray}
which is valid for arbitrary $\alpha$. The reduced buffer state follows after tracing out the source pair or, alternatively, by directly applying the channel \eqref{eq:channelC_k1} to $\rho_0^{(2)} = |\phi\ra\la\phi|^{\otimes 2}$. The cached entanglement is shown in Fig.~\ref{fig:purestate} as a function of $(\theta,\delta)$ for various choices of $\alpha,\beta$, confirming our claim in the main text that the relative phase $\delta$ barely influences the caching performance. 
Figs.~\ref{fig:purestate}(a)-(d) correspond to iSWAP interactions ($\alpha=\beta$), complementing the results for $\delta=0$ shown in Fig.~\ref{fig:singlecopy}(b). 
Figs.~\ref{fig:purestate}(e)-(h) depict the behaviour for increasing $\beta$ starting from the SWAP case $\beta=0$, showing that the caching performance deteriorates with $\beta$, and is worst at $\beta=\alpha$ in (c).  

\begin{figure}
  \includegraphics[width=0.99\linewidth]{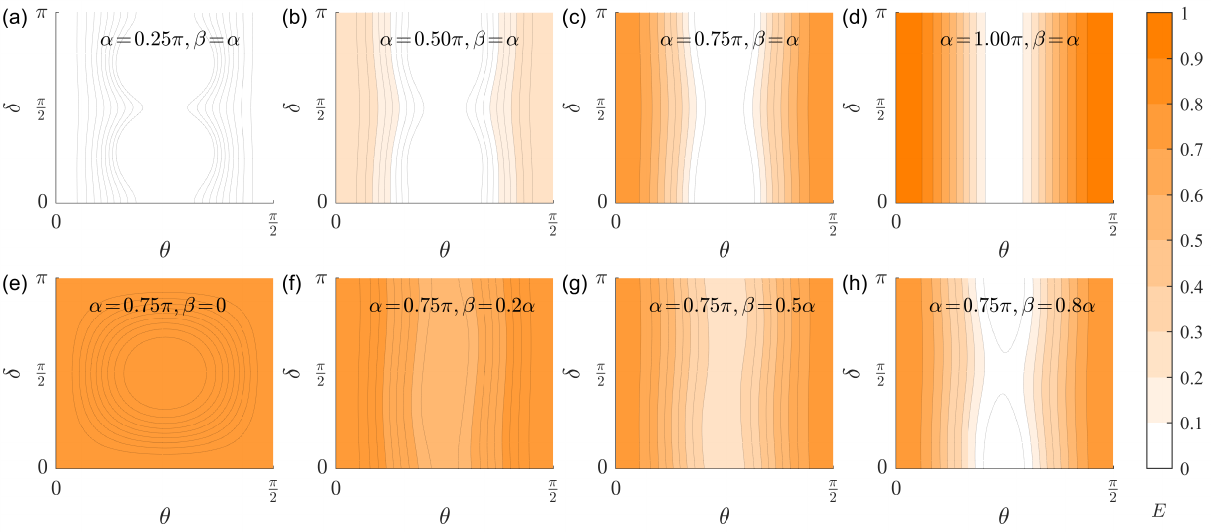}
  \caption{
  Entanglement cached from a single Bell pair in a 1-pair buffer for eight different caching unitaries (a)-(h) specified by the choice of the parameters $\alpha,\beta$. The two buffer qubits are initialized in the same pure state $\cos\theta\ket{0}+\E^{\I \delta}\sin\theta\ket{1}$, and we plot the cached entanglement in ebits of logarithmic negativity as a function of  $(\theta,\delta)$. 
  }
  \label{fig:purestate}
\end{figure}

The strong dependence of the cached entanglement on the superposition angle $\theta$ in the iSWAP case can be alleviated by considering larger buffers. Fig.~\ref{fig:singlecopy2} compares the caching performance of a 2-pair buffer at $\delta=0$ for (a) SWAP-type and (b) iSWAP-type operations. Compared to the 1-pair results in Fig.~\ref{fig:singlecopy}, the iSWAP case results in appreciable entanglement over a greater range of $\theta$-values.

\begin{figure}
  \includegraphics[width=0.49\linewidth]{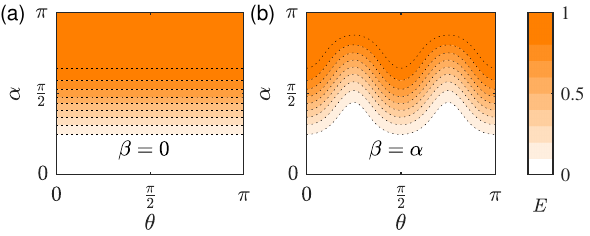}
  \caption{
  Entanglement cached from a single Bell pair in a 2-pair buffer by means of (a) a partial SWAP unitary and (b) a partial iSWAP of varying angles $\alpha$. The buffer qubits are initialized in a pure state $\cos\theta\ket{0}+\sin\theta\ket{1}$ of varying $\theta$. 
  }
  \label{fig:singlecopy2}
\end{figure}

Finally, we consider the repeated application of a weak caching unitary, $\oU^{n} (\alpha,r\alpha)$. In the limit of small $\alpha$, but finite $n\alpha$, we can commute the exponentials in \eqref{eq:U_weak} by virtue of the Baker-Campbell-Hausdorff formula and incur a correction of the order of $n\alpha^2$ that we assume small. Hence the combined source-buffer state becomes $|\Psi_n\ra = \E^{-\I n\alpha\oH}|\phi\ra^{\otimes 2k}|\psi_1\ra + \cO (n\alpha^2)$, with the effective Hamiltonian $\oH  = \oH_{xy}^{A}+\oH_{xy}^{B}+(1-r)(\oH_z^A + \oH_z^B)$ applied for an effective time $n\alpha$. If we break the interaction down into $k$ subsequent buffer pairs, the overall caching time scales like $nk\alpha$. The resulting entanglement is the same as for a single weak caching operation, Eq.~\eqref{apxeq:ent}, but now with $x=\cos^2(\sqrt{k}n\alpha/2)$ containing a finite (and possibly large) argument $n\alpha$ instead of $\alpha$. To cache 1 ebit, one thus requires $n=\pi/(\sqrt{k}\alpha)$ repetitions, which results in a caching time $nk\alpha/\pi=\sqrt{k}$, as plotted in Fig.~\ref{fig:onecopy_big}(b).

\section{Steady state of the multi-copy protocol} \label{Appx:multi}

Here we discuss the steady state of a $k$-pair buffer under repeated caching from a Bell pair sequence. Recall that the multi-copy caching protocol is a repeated application of the caching channel $\cC$, which defines the Markov chain 
\begin{eqnarray}
    \rho_n^{(2k)} = \cC \rho_{n-1}^{(2k)} = \cC^{\circ n} \rho_0^{(2k)}\,.
\end{eqnarray}
In Appendix~\ref{Appx:one_pair}, we have expressed the channel in terms of the Kraus operators $\oM_j = \la \psi_j|\oU(\alpha,\beta)|\psi_1\ra$ and given them explicitly for a 1-pair buffer. The Kraus operators also represent the Lindblad operators of the generator $\cL\rho := \cC[\rho] -\rho$ describing the discrete time increments of the buffer state.

After sufficiently many such increments, the buffer will generally converge towards a steady state $\rho_\infty^{(2k)}$ that obeys $\cL \rho_\infty^{(2k)} = 0$.
For $k=1$, we can analytically confirm that it is uniquely determined by $\oU(\alpha,\beta)$, as given in \eqref{eq:SS_1pair} in the main text. The corresponding logarithmic negativity \eqref{eq:E_SS_1pair} is also plotted in Fig.~\ref{fig:multicopy}(a) of the main text. The results for buffers of size $k=2,3$ shown in Fig.~\ref{fig:multicopy_big} were obtained by numerically solving for the kernel of the generator $\cL$. What we consistently observe is that, for iSWAP-type caching ($\beta=\alpha$), the buffer is always filled to its maximum $E=k\,$ebit, regardless of $\alpha>0$. In fact, the steady state is given by $k$ copies of the Bell state $|\psi_2\ra$, which we can prove for any $k$:

\begin{proposition}
    For iSWAP caching of entanglement from a sequence of Bell-$|\psi_1\ra$ pairs into a $k$-pair buffer, the $k$-ebit buffer state $\ket{\psi_2}^{\otimes k}$ is a steady state for every swap angle $\alpha$.
\end{proposition}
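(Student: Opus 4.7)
The plan is to establish that the joint pure state $\ket{\Psi}:=\ket{\psi_1}_{AB}\otimes\ket{\psi_2}^{\otimes k}_{A_1B_1\ldots A_kB_k}$ of source and buffer is itself a $+1$ eigenvector of the iSWAP caching unitary $\oU(\alpha,\alpha)$ for every $\alpha$. Once this is shown, the proposition follows immediately, since
\begin{equation*}
\cC\bigl[\ket{\psi_2}\bra{\psi_2}^{\otimes k}\bigr]=\tr_{AB}\!\bigl[\oU(\alpha,\alpha)\ket{\Psi}\bra{\Psi}\oU^\da(\alpha,\alpha)\bigr]=\tr_{AB}\ket{\Psi}\bra{\Psi}=\ket{\psi_2}\bra{\psi_2}^{\otimes k}.
\end{equation*}
My strategy splits naturally into a four-qubit base case and an iterative lift to arbitrary $k$, mirroring the product structure of $\oU(\alpha,\alpha)=\prod_{j=1}^k\oS_{AA_j}(\alpha,\alpha)\otimes\oS_{BB_j}(\alpha,\alpha)$.

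For the base case ($k=1$), I would verify directly that $\ket{\psi_1}_{AB}\ket{\psi_2}_{A_jB_j}$ is invariant under $\oS_{AA_j}(\alpha,\alpha)\otimes\oS_{BB_j}(\alpha,\alpha)$. Reordering to the basis $AA_j|BB_j$, this state reads $\frac{1}{2}\bigl(\ket{00}\ket{00}-\ket{01}\ket{01}+\ket{10}\ket{10}-\ket{11}\ket{11}\bigr)$. Each partial iSWAP leaves $\ket{00}$ and $\ket{11}$ fixed, because the underlying Hamiltonian $\propto\os_+\otimes\os_-+\os_-\otimes\os_+$ annihilates them, and on the single-excitation subspace it acts as $\ket{01}\mapsto c\ket{01}-\I s\ket{10}$ and $\ket{10}\mapsto -\I s\ket{01}+c\ket{10}$ with $c=\cos(\alpha/2)$, $s=\sin(\alpha/2)$. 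Substituting this into the sum, the two off-diagonal contributions proportional to $-\I cs$ appear with the same sign on the $|01\rangle|10\rangle$ and $|10\rangle|01\rangle$ terms and cancel between the $-\ket{01}\ket{01}$ and $+\ket{10}\ket{10}$ pieces, while the diagonal $c^2+s^2=1$ collapse reproduces the original signed combination. This short algebraic check is the only non-trivial computation in the proof.

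The lift to general $k$ is then an iteration: I would apply the factors $\oS_{AA_j}(\alpha,\alpha)\otimes\oS_{BB_j}(\alpha,\alpha)$ to $\ket{\Psi}$ one at a time for $j=1,2,\ldots,k$. Each factor acts solely on the four-qubit subsystem $\{A,A_j,B,B_j\}$, on which the current global state restricts exactly to $\ket{\psi_1}_{AB}\otimes\ket{\psi_2}_{A_jB_j}$, the remaining buffer pairs appearing only as spectator tensor factors. By the base-case invariance, this factor leaves the whole state unchanged, and I can move on to the next $j$. The main potential obstacle is that the factors for different $j$ do not commute, since they all share the source qubit $A$ (respectively $B$); however, this non-commutativity is irrelevant here, because each factor reduces to the identity on the evolving state, so the order of the product in Eq.~\eqref{eq:U} does not matter. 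After $k$ such applications, $\oU(\alpha,\alpha)\ket{\Psi}=\ket{\Psi}$, and tracing out the source pair completes the proof.
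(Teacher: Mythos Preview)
Your proof is correct and, in fact, more direct than the paper's. Both arguments aim at the same endpoint---showing that $\ket{\psi_1}_{AB}\otimes\ket{\psi_2}^{\otimes k}$ is fixed by $\oU(\alpha,\alpha)$---but they get there differently. The paper factorizes $\oU(\alpha,\alpha)=\oU^{(k)}\otimes\oU^{(k)}$ along the $A/B$ split, expands each $\oU^{(k)}$ in the source-qubit basis into four block operators $\oU_{ij}^{(k)}$, derives recursion relations for these blocks, and then proves by induction on $k$ four identities of the form $[\oU_{00}^{(k)}\otimes\oU_{00}^{(k)}+\oU_{01}^{(k)}\otimes\oU_{01}^{(k)}]\ket{\psi_2}^{\otimes k}=\ket{\psi_2}^{\otimes k}$, etc. You instead split along the buffer index $j$ and observe the stronger fact that \emph{each individual factor} $\oS_{AA_j}(\alpha,\alpha)\otimes\oS_{BB_j}(\alpha,\alpha)$ already fixes $\ket{\psi_1}_{AB}\otimes\ket{\psi_2}_{A_jB_j}$; since the remaining buffer pairs are genuine spectator tensor factors, every factor of the product acts as the identity on $\ket{\Psi}$ and the result follows without any induction on $k$. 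Your approach is more elementary and exposes why the non-commutativity of the factors is a red herring here; the paper's block-recursion machinery, while heavier for this proposition, would be the natural tool if one wanted to track the action of $\oU(\alpha,\alpha)$ on states other than the fixed point.
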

\begin{proof}
    Let us first factorize the partial iSWAP interaction ($\beta=\alpha$) into 
    \begin{eqnarray}
        \oU(\alpha,\alpha)
        =\prod_{j=1}^k \E^{-\I\alpha(\os_x^A \os_x^{A_j} + \os_y^A \os_y^{A_j} + \os_x^B \os_x^{B_j} + \os_y^B \os_y^{B_j})/4}
        =\prod_{j=1}^k \oR_{xy}^{AA_j}\otimes \prod_{j=1}^k \oR_{xy}^{BB_j} \equiv \oU^{(k)}\otimes \oU^{(k)}\,,
    \end{eqnarray}
    where $\oU^{(k)}$ acts on the $k+1$ qubits of buffer and source either on the $A$-side or on the $B$-side. Expanded in the basis of the source qubit, it can be written as
    \begin{eqnarray}\label{eq:Uk_iSWAP_expand}
        \oU^{(k)}:=\prod_{j=1}^k \oU_{j}
        =\ketbra{0}{0}\otimes\oU_{00}^{(k)}
        +\ketbra{1}{1}\otimes\oU_{11}^{(k)}
        +\ketbra{0}{1}\otimes\oU_{01}^{(k)}
        +\ketbra{1}{0}\otimes\oU_{10}^{(k)}\,.
    \end{eqnarray}
    The expansion "coefficients" $\oU_{ij}^{(k)}$ can be obtained iteratively, starting from $k=1$:
    \begin{eqnarray}
        \oP_{1} &\equiv& \oU_{00}^{(1)} = \la 0|\oR_{xy}(\alpha)|0\ra = \ketbra{0}{0}+\cos(\alpha/2)\ketbra{1}{1}\,, \nonumber \\
        \sigma'_{-} &\equiv& \oU_{10}^{(1)} = \la 1|\oR_{xy}(\alpha)|0\ra = -\I \sin(\alpha/2)\ketbra{0}{1}\,, \nonumber \\
        \oP_{2} &\equiv& \oU_{11}^{(1)} = \la 1|\oR_{xy}(\alpha)|1\ra = \ketbra{1}{1}+\cos(\alpha/2)\ketbra{0}{0}\,, \nonumber \\
        \sigma'_{+} &\equiv& \oU_{01}^{(1)} = \la 0|\oR_{xy}(\alpha)|1\ra = -\I \sin(\alpha/2)\ketbra{1}{0}\,.
    \end{eqnarray}
    Higher $k$-terms follow by virtue of the recursion relations
    \begin{eqnarray}
        \oU_{00}^{(k)}&=&\oU_{00}^{(k-1)}\otimes\oP_{1}+\oU_{10}^{(k-1)}\otimes\sigma'_{+}\,, \nonumber \\
        \oU_{10}^{(k)}&=&\oU_{00}^{(k-1)}\otimes\sigma'_{-}+\oU_{10}^{(k-1)}\otimes\oP_{2}\,, \nonumber \\
        \oU_{11}^{(k)}&=&\oU_{11}^{(k-1)}\otimes\oP_{2}+\oU_{01}^{(k-1)}\otimes\sigma'_{-}\,, \nonumber \\
        \oU_{01}^{(k)}&=&\oU_{11}^{(k-1)}\otimes\sigma'_{+}+\oU_{01}^{(k-1)}\otimes\oP_{1}\,.
    \end{eqnarray}
    We will now derive how the partial iSWAP acts on the buffer state $|\psi_2\ra^{\otimes k}$. Our claim is that
        \begin{eqnarray}
        &&\left[\oU_{00}^{(k)}\otimes\oU_{00}^{(k)}+\oU_{01}^{(k)}\otimes\oU_{01}^{(k)}\right]\ket{\psi_2}^{\otimes k}=\ket{\psi_2}^{\otimes k}\,,\nonumber \\
        &&\left[\oU_{11}^{(k)}\otimes\oU_{11}^{(k)}+\oU_{10}^{(k)}\otimes\oU_{10}^{(k)}\right]\ket{\psi_2}^{\otimes k}=\ket{\psi_2}^{\otimes k}\,,\nonumber \\
        &&\left[\oU_{00}^{(k)}\otimes\oU_{10}^{(k)}+\oU_{01}^{(k)}\otimes\oU_{11}^{(k)}\right]\ket{\psi_2}=0\,,\nonumber \\
        &&\left[\oU_{10}^{(k)}\otimes\oU_{00}^{(k)}+\oU_{11}^{(k)}\otimes\oU_{01}^{(k)}\right]\ket{\psi_2}=0\,. \label{eq:inductionClaim}
        \end{eqnarray}
These are the only four relevant combinations, because we assume that the source Bell pair is in the state $|\psi_1\ra$. The proof of this claim follows by induction. At $k=1$, we can quickly verify
    \begin{eqnarray}
        &&\left[\oU_{00}^{(1)}\otimes\oU_{00}^{(1)}+\oU_{01}^{(1)}\otimes\oU_{01}^{(1)}\right]\ket{\psi_2}=\left(\oP_{1}\otimes\oP_{1}+\sigma'_{+}\otimes\sigma'_{+}\right)\ket{\psi_2}=\ket{\psi_2}\,, \nonumber \\
        &&\left[\oU_{11}^{(1)}\otimes\oU_{11}^{(1)}+\oU_{10}^{(1)}\otimes\oU_{10}^{(1)}\right]\ket{\psi_2}=\left(\oP_{2}\otimes\oP_{2}+\sigma'_{-}\otimes\sigma'_{-}\right)\ket{\psi_2}=\ket{\psi_2}\,, \nonumber \\
        &&\left[\oU_{00}^{(1)}\otimes\oU_{10}^{(1)}+\oU_{01}^{(1)}\otimes\oU_{11}^{(1)}\right]\ket{\psi_2}=\left(\oP_{1}\otimes\sigma'_{-}+\sigma'_{+}\otimes\oP_{2}\right)\ket{\psi_2}=0\,, \nonumber \\
        &&\left[\oU_{10}^{(1)}\otimes\oU_{00}^{(1)}+\oU_{11}^{(1)}\otimes\oU_{01}^{(1)}\right]\ket{\psi_2}=\left(\sigma'_{-}\otimes\oP_{1}+\oP_{2}\otimes\sigma'_{+}\right)\ket{\psi_2}=0\,.
    \end{eqnarray}
    Now suppose our claim \eqref{eq:inductionClaim} holds for $k-1$. We can then verify by elementary calculation that each of the four identities also holds for $k$. For example, the first line in \eqref{eq:inductionClaim} follows by
    \begin{eqnarray}
        \left[\oU_{00}^{(k)}\otimes\oU_{00}^{(k)}+\oU_{01}^{(k)}\otimes\oU_{01}^{(k)}\right]\ket{\psi_2}^{\otimes k} =&&\left[\oU_{00}^{(k-1)}\otimes\oU_{00}^{(k-1)}+\oU_{01}^{(k-1)}\otimes\oU_{01}^{(k-1)}\right]\ket{\psi_2}^{\otimes k-1}\otimes\oP_{1}\otimes\oP_{1}\ket{\psi_2}\nonumber\\
        &+&\left[\oU_{11}^{(k-1)}\otimes\oU_{11}^{(k-1)}+\oU_{10}^{(k-1)}\otimes\oU_{10}^{(k-1)}\right]\ket{\psi_2}^{\otimes k-1}\otimes\sigma'_{+}\otimes\sigma'_{+}\ket{\psi_2}\nonumber\\
        &+&\left[\oU_{00}^{(k-1)}\otimes\oU_{10}^{(k-1)}+\oU_{01}^{(k-1)}\otimes\oU_{11}^{(k-1)}\right]\ket{\psi_2}^{\otimes k-1}\otimes\oP_{1}\otimes\sigma'_{+}\ket{\psi_2}\nonumber\\
        &+&\left[\oU_{10}^{(k-1)}\otimes\oU_{00}^{(k-1)}+\oU_{11}^{(k-1)}\otimes\oU_{01}^{(k-1)}\right]\ket{\psi_2}^{\otimes k-1}\otimes\sigma'_{+}\otimes\oP_{1}\ket{\psi_2}\,,\nonumber\\
        =&&\ket{\psi_2}^{\otimes k-1}\otimes\left(\oP_{1}\otimes\oP_{1}+\sigma'_{+}\otimes\sigma'_{+}\right)\ket{\psi_2}+0 = \ket{\psi_2}^{\otimes k}\,.
    \end{eqnarray}
    The other three identities are obtained in a similar manner.
    We can thus conclude that the combined source-buffer state $|\psi_1\ra |\psi_2\ra^{\otimes k}$ is a fixed point of the partial iSWAP unitary, since
    \begin{eqnarray}
        \oU(\alpha,\alpha) \ket{\psi_1}\ket{\psi_2}^{\otimes k} &=& \frac{1}{\sqrt{2}}\Bigl\{\ket{00}\left[\oU_{00}^{(k)}\otimes\oU_{00}^{(k)}+\oU_{01}^{(k)}\otimes\oU_{01}^{(k)}\right]\ket{\psi_2}^{\otimes k}   +\ket{11}\left[\oU_{11}^{(k)}\otimes\oU_{11}^{(k)}+\oU_{10}^{(k)}\otimes\oU_{10}^{(k)}\right]\ket{\psi_2}^{\otimes k}\nonumber\\ &&+\ket{01}\left[\oU_{00}^{(k)}\otimes\oU_{10}^{(k)}+\oU_{01}^{(k)}\otimes\oU_{11}^{(k)}\right]\ket{\psi_2}^{\otimes k} +\ket{10}\left[\oU_{00}^{(k)}\otimes\oU_{10}^{(k)}+\oU_{01}^{(k)}\otimes\oU_{11}^{(k)}\right]\ket{\psi_2}^{\otimes k}\Bigr\}\nonumber\\
        &=&\ket{\psi_1}\ket{\psi_2}^{\otimes k}\,.
    \end{eqnarray}
    This implies that $\ket{\psi_2}^{\otimes k}$ is a steady state of the buffer for any $\alpha$-value.
\end{proof}

\section{Robustness of multi-copy protocol}\label{Appx:adv}
Here we provide additional details and results on the impact of parameter fluctuations, white noise, and transmission loss in the multi-copy buffering protocol. 

\begin{figure}
  \includegraphics[height=0.20\linewidth]{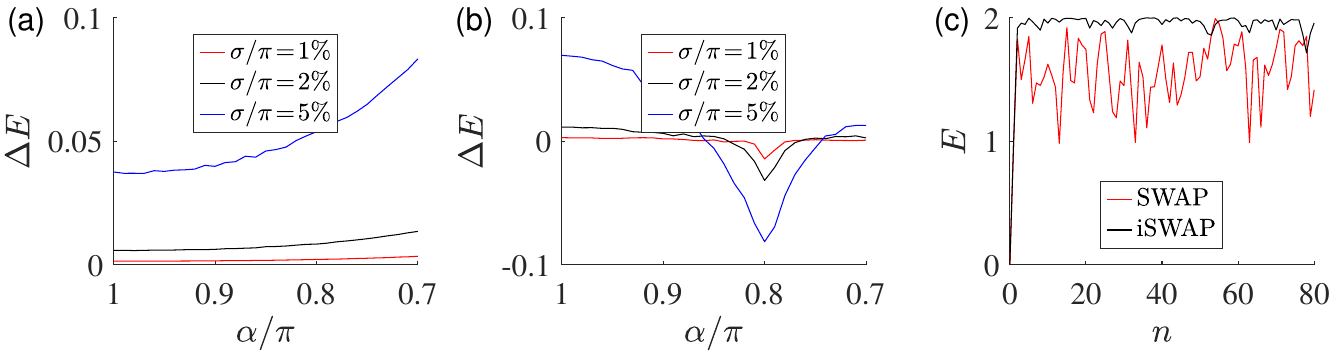}
  \caption{Loss of average steady-state entanglement due to parameter fluctuations of the swap angle $\alpha$ and relative phase $\beta$ for (a) partial SWAPs ($\Tilde{\alpha} \sim \mathcal{N}(\alpha,\sigma)$, $\Tilde{\beta} \sim \mathcal{N}(0,\sigma)$) and (b) partial iSWAPs ($\Tilde{\alpha} \sim \mathcal{N}(\alpha,\sigma)$, $\Tilde{\beta} \sim \mathcal{N}(\alpha,\sigma)$), compared to the zero-fluctuation case, $\Delta E = E(\alpha,\beta)-E(\Tilde{\alpha},\Tilde{\beta})$. 
  We average over 10000 random samples of 40-step trajectories. 
  (c) Random sample trajectories with the standard deviation $\sigma =5\% \times \pi$, for the partial SWAP and partial iSWAP at $\alpha/\pi=0.9$.
  }
  \label{fig:fluctuation2}
\end{figure}

The random fluctuations in the parameters $\alpha$ and $\beta$ are modeled as Gaussian distributions $\mathcal{N}(\mu,\sigma)$, where $\mu$ is the mean value and $\sigma$ is the standard variance.
In the main text, we discussed partial SWAPs ($\Tilde{\alpha} \sim \mathcal{N}(\alpha,\sigma)$, $\Tilde{\beta} \sim \mathcal{N}(0,\sigma)$) and partial iSWAPs ($\Tilde{\alpha} \sim \mathcal{N}(\alpha,\sigma)$, $\Tilde{\beta} \sim \mathcal{N}(\alpha,\sigma)$) under independent distributions with the same variance in each caching step $n$.
As shown in Fig.~\ref{fig:fluctuation} with $\sigma=5\% \pi$, there are two main results: 
(1) fluctuations do not lead to accumulating errors in entanglement caching; 
(2) under the same fluctuations, smaller values of $\alpha$ result in greater losses ($k-E$) in steady entanglement caching, but partial iSWAPs can tolerate smaller $\alpha$ values better than partial SWAPs. 

We make the following additional remarks. 
First, because arbitrary fixed partial iSWAPs can lead to maximal entanglement caching, it is clear that randomly fluctuating partial iSWAPs will incur entanglement loss, and stronger fluctuations with smaller $\alpha$ will lead to more losses; see Fig~\ref{fig:fluctuation2}(a). 
For partial SWAPs however, fluctuations can enhance entanglement caching for a certain region of $\alpha$-values, as shown in Fig~\ref{fig:fluctuation2}(b). 
Second, the random trajectories shown in Fig.~\ref{fig:fluctuation2}(c) illustrate that, in the presence of random fluctuations, the buffer no longer reaches a steady entangled state; only the trajectory-averaged cached entanglement does. 
Compared to partial SWAPs, the protocol with partial iSWAPs is more stable, showing the higher robustness of iSWAPs against the fluctuations.

\begin{figure}
  \includegraphics[width=0.95\linewidth]{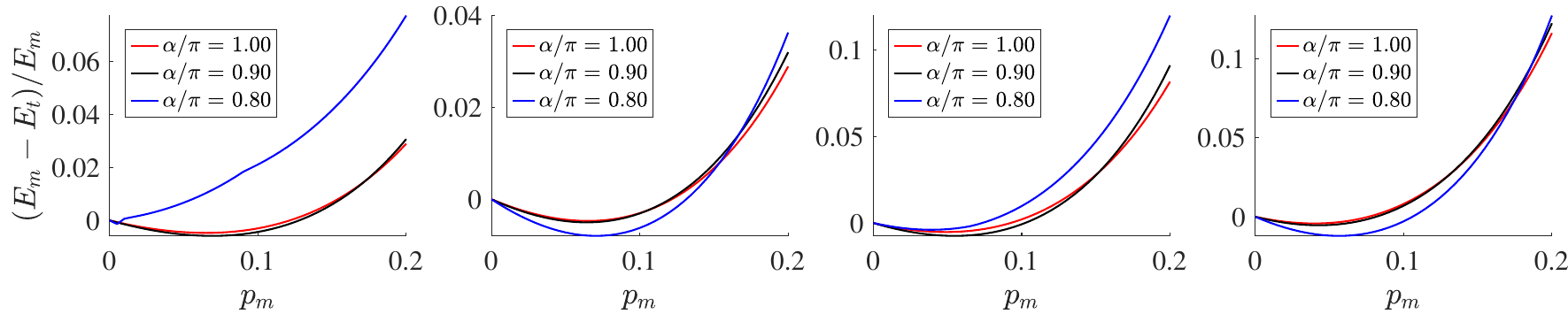}
  \caption{
  Comparison between the steady-state entanglement $E_t (p_t)$ subject to transmission noise of effective strength $p_t=p_m+(k-1)p_m^2$ and the entanglement $E_m (p_m)$ subject to storage noise of strength $p_m$, for various swap angles. We plot the cases of partial SWAPs and partial iSWAPs with a 2-pair buffer (a,b) and the same with a 3-pair buffer (c,d). 
  }
  \label{fig:noise2}
\end{figure}

For the noise during entanglement transmission and the noise during entanglement storage, we consider the identity noise model, which maps the source Bell state and the buffer state, respectively, to
\begin{eqnarray}
    \ket{\psi_1}\bra{\psi_1} &\longrightarrow& (1-p_t) \ket{\psi_1}\bra{\psi_1} + p_t \openone/4\,,\\
    \rho_n^{(2k)} &\longrightarrow& (1-p_m)\cC \rho_{n-1}^{(2k)} + p_m\openone/d\,.
\end{eqnarray}
Our choice is motivated by two reasons: (1) the identity noise (or white noise) physically corresponds to the global depolarizing channel, which is the common assumption especially in large-scale experiments \cite{urbanek_mitigating_2021, mi_information_2021}. 
(2) all noise can be transformed into identity noise form by applying additional noise \cite{foldager_can_2023, dalzell_random_2024}. 
An important observation is that both transmission and storage noise lead to similar behavior in that the buffer reaches a steady state with finite entanglement loss. 
Thus, for the sake of simplicity in possible applications, the storage noise can be included in the transmission noise by increasing the transmission noise parameter $p_t$ approximately as 
\begin{eqnarray}
    p_t \longrightarrow p_t + p_m + (k-1)p_m^2\,.
\end{eqnarray}
The precise difference in buffered steady-state entanglement between this approximation and the actual impact of storage noise at strength $p_m$ is shown in Fig.~\ref{fig:noise2}, assuming there is only storage noise. We compare various swap angles for partial SWAPs and partial iSWAPs with a 2-pair buffer (a,b) and with a 3-pair buffer (c,d).

\begin{figure}
  \includegraphics[width=0.70\linewidth]{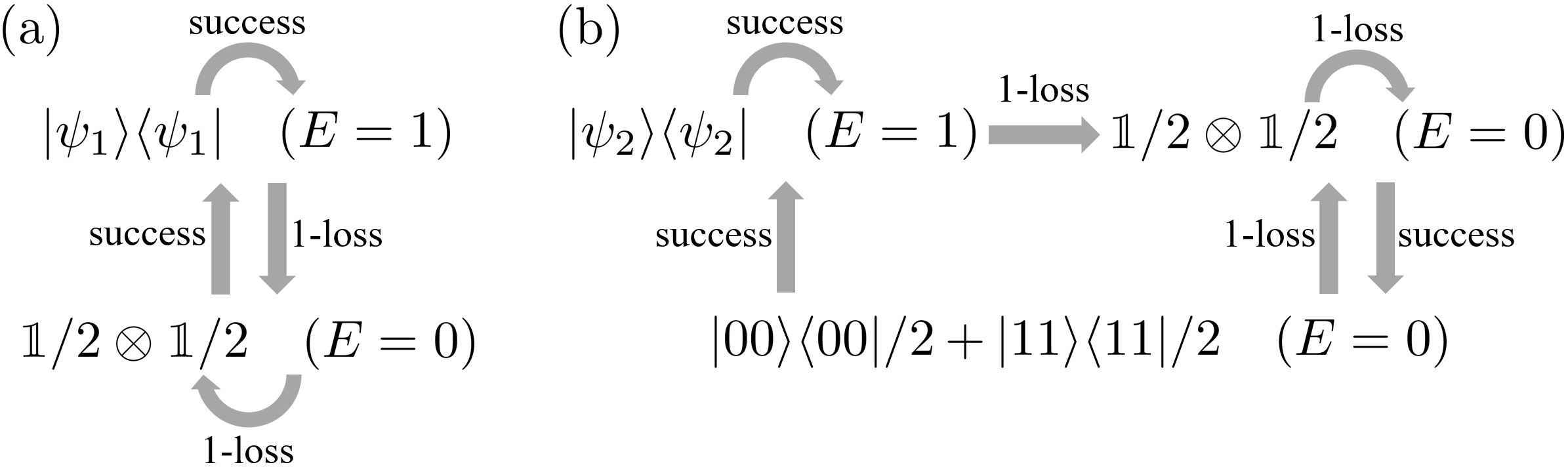}
  \caption{
  Diagram of state transformations between a filled and a depleted 1-pair buffer upon (a) full SWAP and (b) full iSWAP caching that can be successful ("success") or subject to single-sided loss events ("1-loss"). Two-sided loss events never change the buffer state.
  }
  \label{fig:statechange}
\end{figure}

\begin{figure}
  \includegraphics[width=0.6\linewidth]{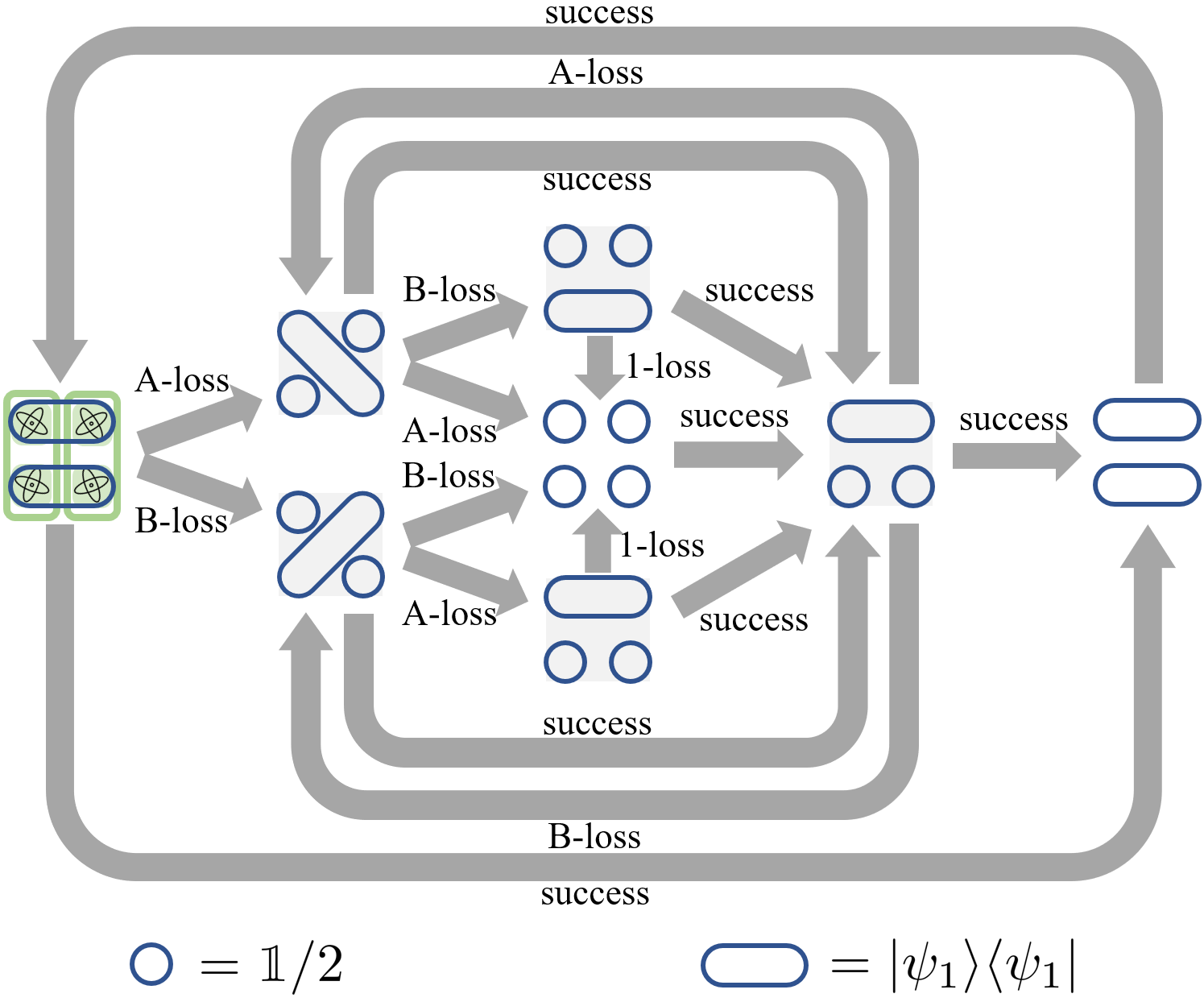}
  \caption{
  Diagram of state transformations between a filled and a depleted 2-pair buffer upon full SWAP caching that can be successful ("success") or subject to single-sided loss events ("1-loss", or "A-loss" and "B-loss" if the loss occurs on side A and B, respectively). Two-sided loss does not affect the buffer state. Grey-shaded nodes denote half-filled 1-ebit buffer states.
  }
  \label{fig:statechange2}
\end{figure}
Next, we consider transmission loss, which is typically more relevant than transmission noise in a quantum network scenario. 
First, we compare the 1-ebit success rate $q_n$ between SWAP and iSWAP caching. In the main text, we found the simple expression \eqref{eq:qn_SWAP} for the success rate using full SWAP caching of a 1-pair buffer, which exceeds the single-shot success probability $p^2$. At the same time, we noticed that iSWAPs are more sensitive to transmission loss with $q_n < p^2$ consistently. The reason lies in the fact that, once the buffer assumes a maximally mixed state (after a single-sided loss event), the full iSWAP caching protocol requires two successful caching operations in a row to refill the buffer, while the full SWAP protocol requires only one. See the diagram in Fig.~\ref{fig:statechange} for a visual representation of both cases. 

Concretely, a full iSWAP protocol with single-sided loss transforms between the buffer states  $\ketbra{\psi_2}{\psi_2}$ (filled), $\openone/2\otimes\openone/2$ (depleted), and $\ketbra{00}{00}/2+\ketbra{11}{11}/2$ (depleted). Consequently, the 1-ebit success rate after $n$ steps requires two two-sided successes with an arbitrary number of up to $n-2$ two-sided losses in between or thereafter,
\begin{equation}
    q'_n = \sum_{k=0}^{n-2} \left[\sum_{\ell=0}^{n-2-k} (1-p)^{2\ell} \right] p^2 (1-p)^{2k} p^2 = \frac{1-(1-p)^{2(n-1)}-(n-1)p(2-p)(1-p)^{2(n-1)}}{(2-p)^2}p^2 \xrightarrow{n\gg 1} \frac{p^2}{(2-p)^2},
\end{equation}
This is smaller than the single-shot success probability $p^2$. 

For larger buffers, we generally obtain a greater robustness against single-sided losses and thus a greater 1-ebit success rate. The diagram in Fig.~\ref{fig:statechange2} illustrates the possible state transformations from a filled to a depleted 2-pair buffer upon full SWAP caching under loss. In the iSWAP case, the diagram looks the same, and the only difference lies in the 1-ebit states (grey-shaded nodes): For SWAPs they are permutations of $|\psi_1\ra\la\psi_1|\otimes \openone/4$, whereas for iSWAPs they are permutations of  $\ket{\psi_1}\bra{\psi_1}\otimes(\ketbra{00}{00}+\ketbra{11}{11})/4+\ket{\psi_2}\bra{\psi_2}\otimes(\ketbra{01}{01}+\ketbra{10}{10})/4$. 
Both cases are therefore equivalent in terms of success rate, and we compute its asymptotic value as $q_{n\gg 1} \approx 63\,\%$.

Numerically, we found a similar behaviour for buffer size $k=4$ (but not $3$). Both full SWAP and full iSWAP caching can reach a 1-ebit success rate of about $89\,\%$, suggesting that the equivalence of both protocols might occur consistently at even $k$. Figure \ref{fig:loss} (c) shows $q_{10} (E\lesssim 1)$ for partial SWAP and iSWAP caching of varying angles $\alpha$ at $k=4$; panels (a) and (b) are copied from Fig.~\ref{fig:probability} for easy comparison. In (d), we show the convergence of the success rate to its asymptotic value with the number of steps $n$, comparing full SWAPs (solid) and full iSWAPs (dotted) and buffers up to $k=4$. Both cases differ only for odd $k$. Moreover, our choice of $n=10$ in (a)-(c) approximates the asymptotic regime sufficiently well.

Moreover, Fig.~\ref{fig:highloss} shows the success rates for the case of more severe losses, $p=0.1$, closer to what is currently achieved in state-of-the-art quantum memory implementations with photonic transmission channels. The relative improvement of the asymptotic 1-ebit success rate compared to the single-shot probability $p^2$ is more pronounced and also increases with buffer size, but reaching it also requires more caching steps. We choose $n=66$, letting $(1-p)^{2n}\leq 10^{-6}$, to converge its asymptotic value.

\begin{figure}[tb]
  \includegraphics[width=0.99\linewidth]{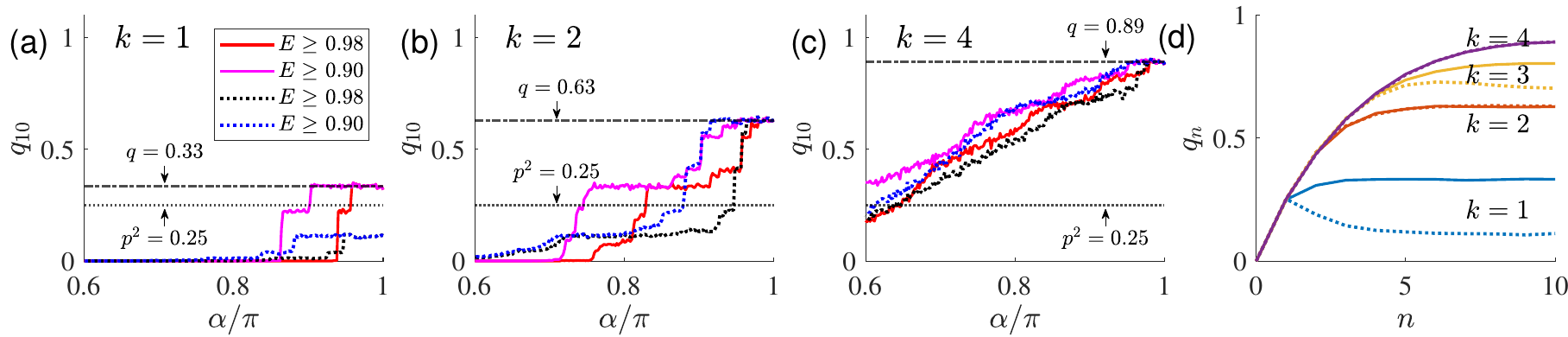}
  \caption{
  Probability to cache $E$ ebits of entanglement by partial SWAPs (solid lines) and iSWAPs (dash lines) of varying swap angle $\alpha$ with a sequence of Bell pairs. The horizontal dotted line marks the probability to successfully transmit the Bell pair, the dash-dotted line gives the asymptotic probability to cache 1 ebit in an arbitrarily long sequence at $\alpha=\pi$. We consider a single-side transmission probability $p=0.5$ for (a) a 1-pair buffer (b) a 2-pair buffer (c) a 4-pair buffer and (d) the convergence of the asymptotic probability with the number of steps $n$. Panels (a),(b) are taken from Fig.~\ref{fig:probability} for direct comparison.}
  \label{fig:loss}
  \vspace{15pt}
  \includegraphics[width=0.99\linewidth]{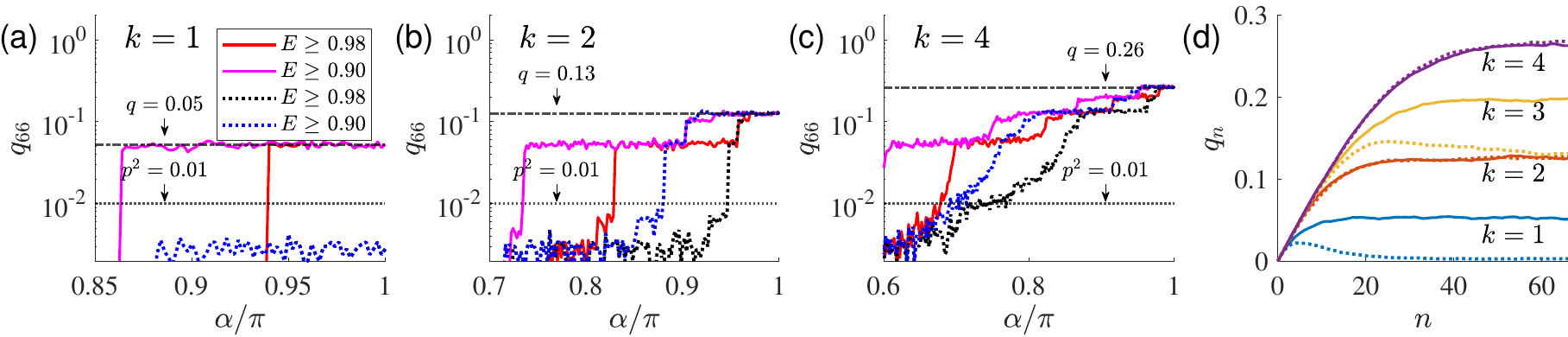}
  \caption{
  Probability to cache $E$ ebits of entanglement by partial SWAPs (solid lines) and iSWAPs (dash lines) of varying swap angle $\alpha$ with a sequence of Bell pairs. Compared to Fig.~\ref{fig:loss}, we here consider a smaller single-side transmission probability $p=0.1$, closer to the state of the art.}
  \label{fig:highloss}
\end{figure}

Finally, we extend our analysis to include the combined effects of transmission loss and memory noise. We set the memory-noise parameter to a fixed, moderate value of $p_m=0.01$ and the single-side transmission success probability to $p=0.5$. In Fig.~\ref{fig:mixture}, we present numerical results highlighting several key points.
First, we observe that small levels of memory noise ($p_m=0.01$) do not substantially alter the overall behavior previously observed in the transmission-loss-only scenario. The protocol remains robust, reliably reaching a stable steady-state entanglement.
Second, as expected, memory noise slightly reduces the achievable steady-state entanglement, thereby lowering the success probability for reaching very high entanglement thresholds (e.g., $E \geq 0.98$ ebits). However, for somewhat lower entanglement thresholds (e.g., $E \geq 0.9$ ebits), the success probability remains essentially unaffected.
Additionally, we consider the combined scenario of transmission loss and an additional small transmission noise ($p_t=0.01$). This inclusion does not significantly alter the observed behavior, further demonstrating the robustness of the protocol under realistic quantum network conditions.

\begin{figure}[tb]
  \includegraphics[width=0.99\linewidth]{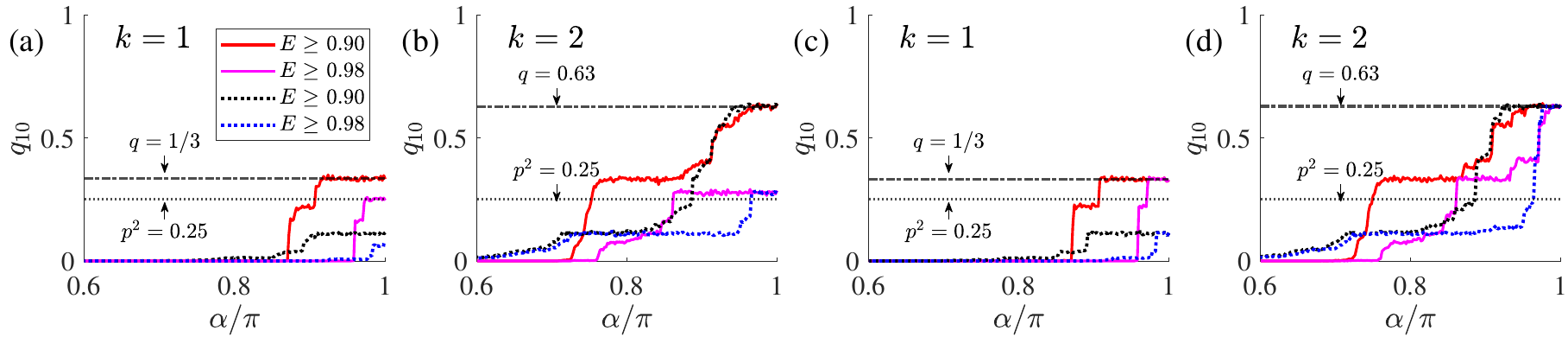}
  \caption{Probability to cache $E$ ebits of entanglement under combined imperfections with single-side transmission probability $p=0.5$. (a),(b) show combined transmission loss and memory noise ($p_m=0.01$), and (c),(d) show combined transmission loss and small transmission noise ($p_t=0.01$), each for buffer sizes $k=1$ and $k=2$, respectively. Partial SWAP (solid lines) and partial iSWAP (dashed lines) caching protocols are compared across varying swap angles $\alpha$. Compared to Fig.~\ref{fig:loss}, memory noise slightly reduces the success probability for caching very high entanglement thresholds ($E=0.98$ ebits) but does not significantly affect moderate thresholds ($E=0.9$ ebits). Likewise, the inclusion of small transmission noise does not substantially alter the overall performance.
  }
  \label{fig:mixture}
\end{figure}

\end{document}